\newcommand{\lspan}{\mathsf{span}}
\newcommand{\norm}[1]{\ensuremath{\left\|#1\right\|}}
\newtheorem{definition}{Definition}
\newtheorem*{theorem*}{Theorem}
\newtheorem{theorem}{Theorem}
\newtheorem{observation}[theorem]{Observation}
\newtheorem{lemma}{Lemma}
\newtheorem*{conjecture*}{Conjecture}
\newtheorem{corollary}{Corollary}
\declaretheorem[name=Theorem]{thm}
\newcommand{\ket}[1]{|{#1}\rangle}
\newcommand{\cH}{{\mathcal H}}
\newcommand{\cU}{{\mathcal U}}
\newcommand{\cB}{{\mathcal B}}
\newcommand{\cQ}{{\mathcal Q}}
\newcommand{\cG}{{\mathcal G}}
\newcommand{\cI}{{\mathcal I}}
\newcommand{\RAG}{\mathsf{RAG}}
\newcommand{\CNOT}{\mathsf{CNOT}}
\newcommand{\APPLY}{\mathsf{APPLY}}
\newcommand{\READ}{\mathsf{READ}}
\newcommand{\QRAM}{\mathsf{QRAM}}
\newcommand{\circuitDesc}[1]{\mathcal{#1}}
\newcommand{\unitarySwap}{U_{swap}}
\newcommand{\unitaryLook}{U_{lookup}}
\newcommand{\unitaryToggle}{U_{toggle}}
\newcommand{\controlledToggle}{C_{toggle}}
\newcommand{\unitarySuperpose}{U_{\text{superpose}}}
\newcommand{\unitaryAlloc}{U_{\text{alloc}}}
\newcommand{\unitaryFree}{U_{\text{free}}}
\newcommand{\QRT}[1]{R_Q(#1)}
\newcommand{\TCP}{\text{TCP}}
\newcommand{\closestPair}{\text{CP}}
\newcommand{\elementDistinctness}{\text{Element Distinctness}}
\newcommand{\fid}{\textnormal{id}}
\newcommand{\kedT}{T_{k}}
\newcommand{\tcpT}{T_{CP}}
\newcommand{\bitVector}{\texttt{B}}
\newcommand{\countSol}{\texttt{count}}
\newcommand{\flag}{\texttt{flag}}
\newcommand{\externalCount}{\texttt{external}}
\newcommand{\eps}{\varepsilon}
\newcommand{\ZO}{\{0,1\}}
\newcommand{\tO}{\widetilde O}
\title{Memory Compression with Quantum Random-Access Gates}
\author{Harry Buhrman${^*}$ \\ \href{mailto:harry.buhrman@cwi.nl}{harry.buhrman@cwi.nl} 
\and Bruno Loff${^\dag}$ \\ \href{mailto:bruno.loff@gmail.com}{bruno.loff@gmail.com}
\and Subhasree Patro${^*}$ \\ \href{mailto:subhasree.patro@cwi.nl}{subhasree.patro@cwi.nl}
\and Florian Speelman$^*$ \\ \href{mailto:f.speelman@uva.nl}{f.speelman@uva.nl} \\[0.35cm]
   $^*$QuSoft, University of Amsterdam, 
 CWI Amsterdam,\\$^\dag$University of Porto and INESC-Tec}
\date{\today}
\begin{document}

\maketitle
\begin{abstract}
    In the classical RAM, we have the following useful property. If we have an algorithm that uses $M$ memory cells throughout its execution, and in addition is sparse, in the sense that, at any point in time, only $m$ out of $M$ cells will be non-zero, then we may ``compress'' it into another algorithm which uses only $m \log M$ memory and runs in almost the same time. We may do so by simulating the memory using either a hash table, or a self-balancing tree.

    We show an analogous result for quantum algorithms equipped with quantum random-access gates. If we have a quantum algorithm that runs in time $T$ and uses $M$ qubits, such that the state of the memory, at any time step, is supported on computational-basis vectors of Hamming weight at most $m$, then it can be simulated by another algorithm which uses only $O(m \log M)$ memory, and runs in time $\tilde O(T)$.
    
    We  show how this theorem can be used, in a black-box way, to simplify the presentation in several papers. Broadly speaking, when there exists a need for a space-efficient history-independent quantum data-structure, it is often possible to construct a space-inefficient, yet sparse, quantum data structure, and then appeal to our main theorem. This results in simpler and shorter arguments.
\end{abstract}
\thispagestyle{empty}

\tableofcontents{}

\newpage
\clearpage
\setcounter{page}{1}

\section{Introduction}
\label{sec:Introduction}

This paper arose out of the authors' recent work on quantum fine-grained complexity~\cite{buhrman-3SUMhard-2021}, where we had to make use of a quantum walk, similar to how Ambainis uses a quantum walk in his algorithm for element distinctness~\cite{Ambainis-ElementDistinctness-2004}. An essential aspect of these algorithms is the use of a history-independent data-structure. In the context of our paper, we needed three slightly different data structures of this type, and on each of these occasions we saw a similar scenario. If we were only concerned with the time complexity of our algorithm, and were OK with a polynomial increase in the space complexity (the number of qubits used by the algorithm), then there was a very simple data structure that would serve our purpose. If, however, we wanted the algorithm to be space-efficient, as well, then we needed to resort to more complicated data structures.

And we made the following further observation: the simple, yet space-inefficient, data structures were actually \textit{sparse}, in the sense that although $M$ qubits were being used, all the amplitude was always concentrated on computational-basis vectors of Hamming weight $\le m \ll M$. The analogous classical scenario is an algorithm that uses $M$ memory registers, but at any time step all but $m$ of these registers are set to $0$. In the classical case, we know how to convert any such an \textit{$m$-sparse} algorithm  into an algorithm that uses $O(m \log M)$ memory, by using, e.g., a hash table. We wondered whether the same thing could be said of quantum algorithms. This turned out to be possible, and the main purpose of this paper is to explain how it can be done. We will take an arbitrary \textit{sparse} quantum algorithm, and \textit{compress} it into a quantum algorithm that uses little space.

Our main theorem is as follows (informally stated):

\begin{theorem}\label{thm:main}
Any $m$-sparse quantum algorithm using time $T$ and $M$ qubits can be simulated with $\eps$ additional error by a quantum algorithm running in time $O(T \cdot \log(\frac{T}{\eps}) \cdot \log(M))$, using $O(m \log M)$ qubits.
\end{theorem}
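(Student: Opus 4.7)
The plan is to maintain a compact, uniquely represented (history-independent) data structure $D$ on $O(m \log M)$ qubits that at every point stores a list of at most $m$ pairs $(a_i, v_i)$, recording the currently non-zero cells $a_i \in [M]$ together with their values $v_i$; all omitted addresses implicitly encode $0$. The critical property is \emph{unique representation}: two branches of a superposition that have reached the same logical memory content through different histories must hold bit-for-bit identical physical contents of $D$, or else interference between those branches is destroyed. Several standard structures qualify, e.g.\ a sorted array of the pairs, a uniquely represented skip list or treap, or a hash table with a fixed (possibly initially random) hash function and deterministic collision resolution.

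Given such a $D$, I would first implement reversible quantum unitaries $\READ$, $\mathsf{WRITE}$, $\mathsf{INSERT}$, and $\mathsf{DELETE}$ on it, each using $O(\log M)$ calls to the quantum random-access gate together with polylogarithmic reversible overhead. I would then show that each gate of the original algorithm can be simulated by $O(1)$ invocations of these unitaries: $\READ$ the relevant cells into a scratch register, apply the original gate to the scratch, and $\mathsf{WRITE}$ the (possibly modified) values back, using $\mathsf{INSERT}$ or $\mathsf{DELETE}$ as needed so that $D$ continues to list exactly the non-zero cells of memory. Sparsity guarantees that $D$ never overflows along any branch of the superposition. Summing over the $T$ original gates gives a cost of $O(T \log M)$, before accounting for approximation error.

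The main obstacle will be realising a $D$ that is simultaneously uniquely represented, of size $O(m\log M)$, and supports unitary $\mathsf{INSERT}/\mathsf{DELETE}$ in worst-case polylogarithmic time. Sorted arrays give unique representation but $\Theta(m)$-time insertions; ordinary hash tables are fast but history-dependent; and many amortised balanced structures leak history into their shape. My approach would be a hash-table construction whose success probability over an entire $T$-step simulation is boosted to $1-\eps$ by pre-selecting $O(\log(T/\eps))$ independent hash functions, using at each step the first that avoids collisions (or, equivalently, by repeating a randomised reversible subroutine that many times). This absorbs the total failure probability into a standard hybrid-argument error of at most $\eps$, contributes the missing $\log(T/\eps)$ factor, and yields the claimed running time of $O(T \log(T/\eps) \log M)$ while leaving the qubit count at $O(m \log M)$ throughout.
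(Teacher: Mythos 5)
Your high-level plan is the same as the paper's: keep only the addresses of the non-zero cells in a compact, history-independent dictionary of size $O(m\log M)$, and replace each random-access gate of the sparse algorithm by $O(1)$ dictionary operations (in the paper's QRAM model only the RAG touches memory, so a single membership-toggle/swap per RAG suffices). The gap is in your instantiation of the dictionary. The property you correctly single out as critical --- bit-for-bit identical physical layout for identical logical content --- is not delivered by your hash-table fix. If the layout of a set $S$ is \emph{defined} as ``use the first of the $O(\log(T/\eps))$ pre-selected hash functions that is collision-free on $S$'', then inserting or deleting a single element can change which function is selected, and restoring the canonical layout forces a rehash of all up-to-$m$ stored items, i.e.\ $\Theta(m)$ time at that step, killing the per-step polylog bound; if instead you keep the current function until it fails, the layout depends on the operation history, and interference between branches that reach the same set by different routes is destroyed. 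Moreover, ``a fixed hash function with deterministic collision resolution'' is not automatically uniquely represented: with linear probing, for instance, the occupied slots are insertion-order-independent but the assignment of keys to slots is not, so one needs a genuinely canonical scheme (e.g.\ Robin Hood probing with a fixed tie-breaking rule), and even then only \emph{expected} probe lengths are short, so the overflowing/slow branches must still be handled by a well-defined unitary and absorbed into the error by an amplitude-weighted (hybrid) argument valid for arbitrary sparse algorithms, not just walk-like ones whose branch distribution is near-uniform. That analysis is the actual content of such a proof and is missing from the sketch.

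For contrast, the paper avoids the canonical-layout problem altogether: $S$ is stored as a radix tree whose memory layout is \emph{not} canonicalized but instead put in uniform superposition over all injective placements of its nodes into $2m$ blocks, with a prefix-sum tree acting as the allocator of free blocks; uniqueness then holds at the level of quantum states rather than bit strings, and insert/delete/lookup run in worst-case $O(\gamma\log m)$ time. The $\log(T/\eps)$ factor there has a different origin than in your proposal: it is the cost of approximating, to error $O(\eps/T)$ per use, the map $\ket{k}\ket{0}\mapsto\frac{1}{\sqrt{k}}\sum_{j=1}^{k}\ket{k}\ket{j}$ needed for allocation, which provably cannot be implemented exactly from standard gates. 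The paper does remark that a hash-table route is possible with slightly worse parameters, but turning your outline into a proof would require supplying exactly the pieces above (a canonical layout with worst-case guarantees, or a superposition-over-layouts trick, plus the error/truncation analysis), so as it stands the proposal does not yet constitute a proof.
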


We will prove this result using quantum radix trees in Section~\ref{sec:CompressingUsingRadixTrees}. The result can also be proven, with slightly worse parameters, using hash tables, but we will not do so here. The sparse algorithm is allowed to use quantum random-access gates (described in the preliminaries Section \ref{sec:Definitions}), and the \textit{compressed} simulation requires such gates, even if the original algorithm does not. 

The $\log M$ factor in the time bound can be removed if we assume that certain operations on $O(\log M)$ bits can be done at $O(1)$ cost. This includes only simple operations such as comparison, addition, bitwise XOR, or swapping of two blocks of $O(\log M)$ adjacent qubits.\footnote{The qubits in each block are adjacent, but the two swapped blocks can be far apart from each other.} All these operations can be done at $O(1)$ cost in the usual classical Random-Access Machines.

The techniques used to prove our main theorem are not new: quantum radix-trees first appeared in a paper by Bernstein, Jeffery, Lange and Meurer \cite{Bernstein-Jeffery-Lange-Meurer-SubsetSum-2013} (see also Jeffery's PhD thesis \cite{Stacey-PhDThesis-2014}). One contribution of our paper is to present BJLM technique in full, as in currently available presentations of the technique, several crucial aspects of the implementation are missing or buggy\footnote{For example, some operations are defined which are not unitary. Or, there is no mention of error in the algorithms, but they actually cannot be implemented in an error-free way using a reasonable number of gates from any standard gate set.}.

But our main contribution is to use these techniques at the right level of abstraction. Theorem \ref{thm:main} is very general, and can effectively be used as a black box. One would think that Theorem \ref{thm:main}, being such a basic and fundamental statement about quantum computers, and being provable essentially by known techniques, would already be widely known. But this appears not to be the case, as papers written as recently as a year ago could be significantly simplified by appealing to such a theorem. Indeed, we believe that the use of Theorem \ref{thm:main} will save researchers a lot of work in the future, and this is our main motivation for writing this paper.

To illustrate this point, in Section \ref{sec:Simplifications} we will overview three papers \cite{Ambainis-ElementDistinctness-2004, Aaronson-ClosestPair-2019, buhrman-3SUMhard-2021} that make use of a quantum walk together with a history-independent data structure. These papers all use complicated but space-efficient data structures. As it turns out, we can replace these complicated data structures with very simple tree-like data structures. These new, simple data structures are memory inefficient but sparse, so we may then appeal to Theorem \ref{thm:main} to get similar upper bounds. The proofs become shorter: we estimate each of these papers could be cut in size by 4 to 12 pages. And furthermore, using simpler (memory inefficient but sparse) data-structures allows for a certain \textit{separation of concerns}: when one tries to describe a space-efficient algorithm, there are several bothersome details that one needs to keep track of, and they obscure the presentation of the algorithm. By using simpler data structures, these bothersome details are disappear from the proofs, and are entrusted to Theorem \ref{thm:main}.

\section{Definitions}
\label{sec:Definitions}

We let $[n] = \{1, \ldots, n \}$, and let $\binom{[n]}{\le m}$ be the set of subsets of $[n]$ of size at most $m$.

We let $\cH(N)$ denote the complex Hilbert space of dimension $N$, and we let $\cU(N)$ denote the space of unitary linear operators from $\cH(N)$ to itself (i.e.\ the unitary group). We let $\cB$ denote a set of universal quantum gates, which we will fix to containing the $\CNOT \in \cU(4)$ and all single-qubit gates, 
but which we could have been chosen from among any of the standard possibilities. 

Of particular importance to this paper will be the set $\cQ = \cB \cup \{ \RAG_n \mid n \text{ a power of } 2 \}$ which contains our universal set together with the \textit{random-access gates}, so that $\RAG_n \in \cU(n 2^{1 + n})$ is defined on the computational basis by:
\[
\RAG_n \ket{i, b, x_0, \ldots, x_{n-1}} = \ket{i, x_i, x_0, \ldots, x_{i-1}, b, x_{i+1}, \ldots, x_{n-1}} \tag*{$\forall i\in[n], b, x_0, \ldots, x_{n-1} \in \ZO$}
\]

\medskip\noindent
We now give a formal definition of what it means to solve a Boolean relation $F \subseteq \ZO^n\times\ZO^m$ using a quantum circuit. This includes the special case when $F$ is a function. 

A \textit{quantum circuit} over a gate set $\cG$ (such as $\cB$ or $\cQ$) is a tuple $C = (n, T, S, C_1, \ldots, C_T)$, where $T \ge 0$, $n, S \ge 1$ are natural numbers, and the $C_t$ give us a sequence of instructions. Each instruction $C_t$ comes from a set $\cI_\cG(S)$ of possible instructions, defined below. The number $n$ is the input length, the number $T$ is the \textit{time complexity}, and $S$ is the \textit{space complexity}, also called the \textit{number of wires} or the \textit{number of ancillary qubits} of the circuit. Given an input $x \in \ZO^n$, at each step $t \in \{0, \ldots, T\}$ of computation, the circuit produces an $S$-qubit state $\ket{\psi_T(x)} \in \cH(2^S)$, starting with $\ket{\psi_0(x)} = \ket{0}^{\otimes s}$, and then applying each instruction $C_t$, as we will now describe.

For each possible $q$-qubit gate $G \in \cG \cap \cU(2^q)$, and each possible ordered choice $I = (i_1, \ldots, i_{q}) \in [S]^{q}$ of distinct $q$ among $S$ qubits, we have an instruction $\APPLY_{G,I}\in \cI_\cG(S)$ which applies gate $G$ to the qubits indexed by $I$, in the prescribed order. The effect of executing the instruction $\APPLY_{G,I}$ on $\ket{\psi} \in \cH(2^S)$ is to apply $G$ on the qubits indexed by $I$, tensored with identity on the remaining $S - q$ qubits. I.e., $\APPLY_{G,I} \in \cU(2^S)$ corresponds to the unitary transformation defined on each basis state by:
\[
\APPLY_{G,I} \cdot \ket{y_I} \otimes \ket{y_J} = (G \ket{y_I}) \otimes \ket{y_J},
\]
where $J = [S]\setminus I$.

Furthermore, for each possible ordered choice $I = (i_1, \ldots, i_{\lceil \log n \rceil}) \in [S]^{\lceil \log n \rceil}$ of distinct $\lceil \log n \rceil$ among $S$ qubits, and each $i \in [S]\setminus I$, we have an instruction $\READ_{I, i}\in \cI_\cG(S)$, which applies the query oracle on the qubits indexed by $I$ and $i$. I.e., given an input $x \in \ZO^n$, the instruction $\READ_{I, i} \in \cU(2^S)$ applies the unitary transformation defined on each basis state by:
\[
\READ_{I, i} \cdot \ket{y_I} \otimes \ket{y_i} \otimes \ket{y_J} = \ket{y_I} \otimes \ket{y_i\oplus x_{y_I}} \otimes \ket{y_J},
\]
where $J = [S] \setminus (I \cup \{i\})$.

Hence if we have a sequence $C_1, \ldots, C_T$ of instructions and an input $x$, we may obtain the state of the memory at time step $t$, on input $x$, by $\ket{\psi_0(x)} = \ket{0}^{\otimes S}$ and $\ket{\psi_{t+1}(x)} = C_{t+1} \ket{\psi_t(x)}$.

We say that a quantum circuit $C = (n, T, S, C_1, \ldots, C_T)$ \textit{computes} or \textit{solves a relation $F \subseteq \ZO^n\times\ZO^m$ with error $\eps$} if $C$ is such that, for every input $x \in \ZO^n$, if we measure the first $m$ qubits of $\ket{\psi_T(x)}$ in the computational basis, we obtain, with probability $\ge 1 - \eps$, a string $z \in \ZO^m$ such that $(x, z) \in F$.


\subsection{Quantum Random-Access Machine (QRAM)}
\label{sec:QRAM}

Generally speaking, a quantum circuit is allowed to apply any of the basic operations to any of its qubits. In the definition given above, a quantum random-access gate can specify any permuted subset of the qubits to serve as its inputs. This allows for unusual circuit architectures, which are undesirable.

One may then define a more restricted class of circuits, as follows. We think of the qubits as divided into two parts: work qubits and memory qubits. We have $M$ memory qubits and $W = O(\log M)$ work qubits, for a total space complexity $S = W + M$. We restrict the circuit so that any unitary gate $G \in \cB$, or read instruction, must be applied to work qubits only. And, finally, any random-access gate must be applied in such a way that the addressing qubits ($i$) and the swap qubit ($b$) are always the first $\log M + 1$ work qubits, and the addressed qubits ($x_0, \ldots, x_{M-1}$) are exactly the memory qubits, and are always addressed in the same, fixed order, so one can speak of \textit{the first memory qubit, the second memory qubit}, \emph{etc}. We may then think of a computation as alternating between doing some computation on the work registers, then swapping some qubits between work and memory registers, then doing some more computation on the work registers, and so forth. The final computational-basis measurement is also restricted to measuring a subset of the work qubits.

Under these restrictions, a circuit of time complexity $T$ may be encoded using $O(T \log S)$ bits, whereas in general one might need $\Omega(T S)$ qubits in order to specify how the wires of the circuit connect to the random-access gates.

We will then use the term a \textit{quantum random-access machine algorithm}, or \textit{QRAM algorithm}, for a family of circuits that operate under these restrictions.\footnote{Such a computational model has been referred to by several names in the past. For instance, the term \emph{QRAQM} appears in several publications, starting with \cite{Kuperberg-2005}, and \emph{QAQM} has also been used~\cite{Naya-Plasencia-Schrottenloher-2020}.}


\todo{Some words on the overloaded use of the term "QRAM", and what it means to different people.}
\todo{Mention terms that people have used in the past. For instance, QRAQM appears in a lot of papers, first having been used in ``Another subexponential-time quantum algorithm for the dihedral hidden subgroup problem'', and QAQM has also been used in ``Optimal Merging in Quantum k-xor and k-sum Algorithms''.}

\subsection{Sparse \texorpdfstring{$\QRAM$}{QRAM} algorithms}

In classical algorithms, we may have an algorithm which uses $M$ memory registers, but such that, at any given time, only $m$ out of these $M$ registers are non-zero. In this case we could call such an algorithm \textit{$m$-sparse}. The following definition is the quantum analogue of this.

\begin{definition}
Let $\circuitDesc{C} = (n,T, W, M, C_1, \ldots, C_T)$ be a $\QRAM$ algorithm using time $T$, $W$ work qubits, and $M$ memory qubits. Then, we say that $C$ is $m$-sparse, for some $m \le M$, if at every time-step $t \in \{0, \ldots, T\}$ of the algorithm, the state of the memory qubits is supported on computational basis vectors of Hamming weight $\le m$. I.e., we always have
\[
\ket{\psi_t} \in \lspan \left(  \ket{u}\ket{v} \;\middle|\; u \in \ZO^W, v \in \binom{[M]}{\le m} \right)
\]
In other words, if $\ket{\psi_t}$ is written in the computational basis:
\[
\ket{\psi_t}=\sum_{u \in \ZO^W} \sum_{v \in \ZO^M} \alpha^{(t)}_{u,v} \cdot \underbrace{\ket{u}}_{\text{Work qubits}}\otimes \underbrace{\ket{v}}_{\text{Memory qubits}},
\]
then $\alpha^{(t)}_{u,v} = 0$ whenever $|v| > m$.
\end{definition}

\todo{Make remark here about different choice of basis. If we have an algorithm that is initialized to a non-zero state, we should define sparsity relative to a basis for which this non-zero state is actually all zeros.}

\subsection{Time complexity of simple operations (the constant \texorpdfstring{$\gamma$}{gamma})}\label{sec:Definitions-gamma}

Throughout the paper we will often describe algorithms that use certain simple operations over a logarithmic number of bits. These may include comparison, addition, bitwise XOR, swapping, and others. In a classical random-access machine, all of these operations can be done in $O(1)$ time, as in such machines it is usually considered that every memory position is a register that can hold $O(\log n)$ bits, and such simple operations are taken to be machine instructions.

We do not necessarily wish to make such an assumption for quantum algorithms, since we do not really know what a quantum computer will look like, just yet. So we will broadly postulate the existence of a quantity~$\gamma$, which is an upper-bound on the time complexity of doing such simple operations. We then express our time upper-bounds with $\gamma$ as a parameter. Depending on the precise architecture of the quantum computer, one may think of $\gamma$ as being $O(1)$, or $O(\log n)$. In all our bounds, the simple operations that we will make use of can always be implemented using $O(\log M)$ elementary gates.

\subsection{Controlled unitaries}

Sometimes we will explain how to implement a certain unitary, and we wish to have a version of the same unitary which can be activated or deactivated depending on the state of an additional control bit. We will make free use of the following lemma, which we state without proof.

\begin{lemma}\label{lem:controlled-gates}
If a unitary $U$ can be implemented using $T$ gates from $\cQ$, then the unitary
\[
\ket{b}\ket{x}  \mapsto \begin{cases}
    \ket{b}(U \ket x)  & \text{if } b = 1\\
    \ket{b} \ket x  & \text{if } b = 0
\end{cases}
\]
can be implemented (without error) using $O(T)$ gates from $\cQ$.
\end{lemma}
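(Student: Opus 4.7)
The plan is to controlize the gates of $U$ one at a time. Writing $U=G_T G_{T-1}\cdots G_1$ with each $G_t\in\cQ$, the controlled unitary factors as the product of the individual controlled-$G_t$'s, so it suffices to show that any single gate from $\cQ$ can be controlized using $O(1)$ gates from $\cQ$ plus a constant number of clean ancillae that are uncomputed afterwards. Concatenating $T$ such gadgets then yields controlled-$U$ in $O(T)$ gates, exactly. For gates in $\cB$ this is standard: any single-qubit unitary is controlized using the usual $O(1)$-gate decomposition into $\CNOT$s and single-qubit gates, and controlled-$\CNOT$ is just the Toffoli gate, which has an exact $O(1)$-gate decomposition over $\cB$.

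The only real obstacle is controlizing the random-access gate $\RAG_n$, because no value of the index makes $\RAG_n$ act trivially on its swap qubit, so one cannot simply route the address into a ``do-nothing'' slot. The trick I would use is to \emph{redirect} the swap into a scratch memory whenever the control bit is off. Let $c$ denote the control qubit (called $b$ in the lemma statement) and let $i$, $b$, $x_0,\dots,x_{n-1}$ be the index, swap qubit, and memory of the $\RAG_n$ being controlled. Allocate fresh ancillae $j,b',y_0,\dots,y_{n-1}$, all in $\ket 0$, and perform: (i) apply $X$ to $j$ and then $\CNOT$ from $c$ into $j$, so that the extended index $(i,j)\in\{0,\dots,2n-1\}$ equals $i$ when $c=1$ and $i+n$ when $c=0$; (ii) controlled on $\lnot c$, apply a Fredkin gate to $b$ and $b'$, parking the original contents of $b$ in $b'$ and zeroing $b$ exactly when $c=0$; (iii) apply a single $\RAG_{2n}$ to $(i,j,b,x_0,\dots,x_{n-1},y_0,\dots,y_{n-1})$; (iv) reverse steps (ii) and (i). When $c=1$, the extended index lies inside the real memory block and step (iii) is exactly $\RAG_n$; when $c=0$, the extended index lies in the scratch block and $b$ has been temporarily cleared, so $\RAG_{2n}$ swaps two $\ket 0$ qubits and is the identity on everything of interest. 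All ancillae are returned to $\ket 0$, and the whole procedure uses $O(1)$ gates from $\cB$ (for the Fredkin gate, the $\CNOT$s, and the $X$) plus a single $\RAG$ gate.

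The main obstacle is precisely this controlled-$\RAG$ construction: one has to verify that when $c=0$ the real memory is untouched (which relies on both the scratch block and the parked $b'$ starting in $\ket 0$) and that when $c=1$ the scratch bookkeeping acts as the identity on the control and on the ancillae. Both claims drop out of the case analysis sketched above, and together with the two easy cases from $\cB$ they give the $O(T)$-gate controlled implementation promised by the lemma, with no error.
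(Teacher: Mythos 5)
Your proposal is correct, and in fact the paper offers nothing to compare it with: Lemma~\ref{lem:controlled-gates} is explicitly ``stated without proof,'' so your gate-by-gate controlization supplies the missing argument. The two cases over $\cB$ are unproblematic, since $\cB$ contains \emph{all} single-qubit gates, so the exact controlled-$V$ decomposition (two $\CNOT$s plus single-qubit corrections, with a phase on the control) and the exact Toffoli decomposition are available without error; the only caveat worth stating is that the given $T$-gate implementation of $U$ must be exact as a map $\ket{x}\ket{0}\mapsto (U\ket{x})\ket{0}$ (if it were only correct up to a global phase, controlizing gate-by-gate would introduce a relative phase, fixable by one extra phase gate on the control). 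Your controlled-$\RAG_n$ gadget is the real content and it checks out: with $j=1\oplus c$ as the high-order address bit, the extended index points into the live memory exactly when $c=1$ and into the all-zero mirror block when $c=0$, and the Fredkin parking of $b$ into $b'$ ensures that in the $c=0$ branch $\RAG_{2n}$ swaps two $\ket{0}$ qubits, so both branches restore every ancilla to $\ket{0}$ and the construction is exactly the controlled gate, also on superpositions of $c$; note also that $2n$ is again a power of two, so $\RAG_{2n}\in\cQ$ as required. One small inaccuracy: you advertise ``a constant number of clean ancillae,'' but this gadget uses $n+2$ scratch qubits ($j$, $b'$, and the mirror memory $y_0,\dots,y_{n-1}$). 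Since the lemma only claims a gate count of $O(T)$ this does not affect correctness, and the scratch block can be reused across all $T$ gadgets, but in a paper whose point is memory compression it would be worth saying explicitly that this particular controlization temporarily doubles the addressed memory.
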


\section{Compressing sparse algorithms using quantum radix trees}
\label{sec:CompressingUsingRadixTrees}

Let $\circuitDesc{C}=(n, T, W, M, C_1, \ldots, C_T)$ be the circuit of an $m$-sparse $\QRAM$ algorithm computing a relation $F$ with error $\eps$ and let the state of the algorithm at every time-step $t$, when written in the computational basis, be
\begin{equation}
\label{eq:StateEvolutionOriginalAlgorithm}
\ket{\psi_t}=\sum_{u \in \ZO^W} \sum_{v \in \binom{[M]}{\le m}} \alpha^{(t)}_{u,v} \underbrace{\ket{u}}_{\text{Work qubits}}\otimes \underbrace{\ket{v}}_{\text{Memory qubits}}.
\end{equation}
 Using the description of $\circuitDesc{C}$ and the assumption that this algorithm is $m$-sparse we will now construct another $\QRAM$ algorithm $\circuitDesc{C}'$ that uses much less space ($O(m\log M)$ qubits) and computes $F$ with almost same error probability with only $O(\log M \log T)$ factor worsening in the run time.

\paragraph{Main observation.} As the state of the memory qubits in $\ket{\psi_t}$ for any $t$ is only supported on computational basis vectors of Hamming weight at most $m$, one immediate way to improve on the space complexity is to succinctly represent the state of the sparsely used memory qubits. The challenge, however, is that every instruction $C_i$ in $\circuitDesc{C}$ might not have an \emph{easy} analogous implementation in the succinct representation. So we will first present a succinct representation and then show that, for every instruction $C_i$ in the original circuit $\circuitDesc{C}$, there is an analogous instruction or a series of instructions that evolve the state of the succinct representation in the same way as the original state evolves due to the application of $C_i$.

\paragraph{A succinct representation.} Let $v \in \ZO^M$ be a vector with $|v|\le m$ (with $\ket{v}$ being the corresponding quantum state that uses $M$ qubits). Whenever $m$ is significantly smaller than $M$ (i.e., $m\log M < M$) we can instead represent the vector $v$ using the list of indices $\{i\}$ such that $v[i]=1$. Such a representation will use much fewer (qu)bits. Let $S_v$ denote the set of indices $i$ such that $v[i]=1$. We will then devise a quantum state $\ket{S_v}$, that represents the set $S_v$ using a quantum data structure. This representation will be unique, meaning that for every sparse computational-basis state $\ket{v}$ there will be a unique corresponding quantum state $\ket{S_v}$, and $\ket{S_v}$ will use much fewer qubits. Then for every time-step $t$, the quantum state $\ket{\psi_t}$ from Equation~(\ref{eq:StateEvolutionOriginalAlgorithm}) has a corresponding \emph{succinctly represented} quantum state $\ket{\phi_t}$ such that 

\begin{equation}
\label{eq:StateEvolutionSuccinctAlgorithm}
\ket{\phi_t}=\sum_{u \in \ZO^w} \sum_{v \in \binom{[M]} {\le m}} \alpha_{u,v}^{(t)} \ket{u} \otimes \ket{S_v}.    
\end{equation}

By using such a succinct representation, we will be able to simulate the algorithm $\circuitDesc{C}$ with $O(m \log M)$ qubits, with an $O(\gamma \log \frac{T}{\delta})$ additional factor overhead in time and an additional $\delta$ probability of error.

To obtain the desired succinct representation $\ket{S_v}$, we use the quantum radix trees appearing in an algorithm for the subset-sum problem by Bernstein, Jeffery, Lange, and Meurer
\cite{Bernstein-Jeffery-Lange-Meurer-SubsetSum-2013} (see also~\cite{Stacey-PhDThesis-2014}). Several crucial aspects of the implementation were missing or buggy, and required some amount of work to complete and fix. The resulting effort revealed, in particular, that the data-structure is unlikely to be implementable efficiently without error (as it relies on a particular gate which cannot be implemented in an error-free way using the usual basic gates). So we here include all the required details.

\subsection{Radix Tree}

A quantum radix tree is a quantum data structure inspired by the classical radix tree whose definition is as follows. 

\begin{definition}
\label{def:RadixTrees}
A radix tree is a rooted binary tree, where the edges are labeled by non-empty binary strings, and the concatenation of the labels of the edges along any root-to-leaf path results in a string of the same length $\ell$ (independent of the chosen root-to-leaf path). The value $\ell$ is called the \emph{word length} of the tree.

There is a bijective correspondence between radix trees $R$ of word length $\ell$ and subsets $S \subseteq \ZO^\ell$. Given $R$, we may obtain $S$ as follows. Each root-to-leaf path of $R$ gives us an element $x \in S$, so that $x$ is the concatenation of all the edge labels along the path. 

If $R$ corresponds to $S$, we say that $R$ \emph{stores}, or \emph{represents} $S$, and write $R(S)$ for the radix tree representing $S$, i.e., for the inverse map of what was just described (see below).
\end{definition}

An example of a radix tree appears in Figure~\ref{fig:radixTree}.\todo{Figure}

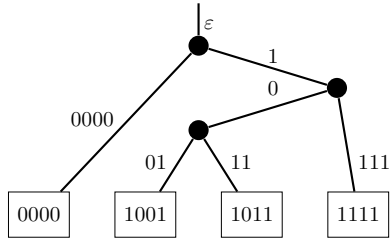
\begin{figure}[h]
    \centering
    \scalebox{0.8}{
\begin{tikzpicture}[scale=0.35]

\draw (-3.5,2.5) node [shape =circle, white, fill=white] (empty){};
\draw (-3,1) node 
  [shape=rectangle, minimum height=0.75cm, minimum width=1cm]
 (Eps){$\eps$};

\draw (-3.5,0) node [shape =circle, black,fill=black] (node1){};

\draw (3,-2) node [shape =circle, black,fill=black] (node21){};

\draw (-3.5,-4) node [shape =circle, black,fill=black] (node31){};

\draw (-11,-8) node 
  [shape=rectangle, minimum height=0.75cm, minimum width=1cm, draw=black]
 (Rectangle1){$0000$};

\draw (-6,-8) node 
  [shape=rectangle, minimum height=0.75cm, minimum width=1cm, draw=black]
 (Rectangle2){$1001$};
\draw (-1,-8) node 
  [shape=rectangle, minimum height=0.75cm, minimum width=1cm, draw=black]
 (Rectangle3){$1011$};
 
\draw (4,-8) node 
  [shape=rectangle, minimum height=0.75cm, minimum width=1cm, draw=black]
 (Rectangle4){$1111$};

\draw [line width=1pt,-] (node1) -- (Rectangle1); 
\draw (-8.5,-3.5) node 
  [shape=rectangle, minimum height=0.75cm, minimum width=1cm]
 (Label0000){$0000$}; 
\draw [line width=1pt,-] (node1) -- (node21);
\draw (0,-0.5) node 
  [shape=rectangle, minimum height=0.75cm, minimum width=1cm]
 (Label1){$1$}; 
\draw [line width=1pt,-] (node21) -- (node31);
\draw (0,-2) node 
  [shape=rectangle, minimum height=0.75cm, minimum width=1cm]
 (Label0){$0$}; 
\draw [line width=1pt,-] (node21) -- (Rectangle4);
\draw (4.75,-5.5) node 
  [shape=rectangle, minimum height=0.75cm, minimum width=1cm]
 (Label111){$111$};
\draw [line width=1pt,-] (node31) -- (Rectangle2);
\draw (-5.5,-5.5) node 
  [shape=rectangle, minimum height=0.75cm, minimum width=1cm]
 (Label01){$01$};
\draw [line width=1pt,-] (node31) -- (Rectangle3);
\draw (-1.5,-5.5) node 
  [shape=rectangle, minimum height=0.75cm, minimum width=1cm]
 (Label11){$11$};

\draw [line width=1pt,-] (empty) -- (node1);
\end{tikzpicture}
}
    \caption{A radix tree storing the set $\{0000, 1001, 1011, 1111\}$} 
    \label{fig:radixTree}
\end{figure}

Given a set $S \subseteq \ZO^\ell$, we obtain $R(S)$ recursively as follows: The empty set corresponds to the tree having only the root and no other nodes. We first find the longest common prefix $p \in \ZO^{\le \ell}$ of $S$. If $|p| > 0$, then we have a single child under the root, with a $p$-labeled edge going into it, which itself serves as the root to $R(S')$, where $S'$ is the set of suffixes (after $p$) of $S$. If $|p| = 0$, then the root will have two children. Let $S=S_0 \cup S_1$, where $S_0$ and $S_1$ are sets of strings starting with $0$ and $1$, respectively, in $S$. The edges to the left and right children will be labeled by $p_0$ and $p_1$, respectively, where $p_0 \in \ZO^{\le \ell}$ is the longest common prefix of $S_0$ and $p_1 \in \ZO^{\le \ell}$ is the longest common prefix of $S_1$. The left child serves as a root to $R(S'_0)$, where $S'_0$ is the set of suffixes (after $p_0$) of $S_0$. Analogously, the right child serves as a root to $R(S'_1)$, where $S'_1$ is the set of suffixes (after $p_1$) of $S_1$. 

\paragraph{Basic operations on radix trees} The allowed basic operations on a radix tree are insertion and removal of an element. Classically, an attempt at inserting an element already in $S$ will result in the identity operation. Quantumly, we will instead allow for \textit{toggling} an element in/out of $S$.

\paragraph{Representing a radix tree in memory} We now consider how one might represent a radix tree in memory. For this purpose, suppose we wish to represent a radix tree $R(S)$ for some set $S \subseteq \ZO^\ell$ of size $|S| \le m$. Let us assume without loss of generality that $m$ is a power of $2$, and suppose we have at our disposal an array of $2m$ memory blocks.

Each memory block may be used to store a node of the radix tree. If we have a node in the tree, the contents of its corresponding memory block will represent a tuple $(z, p_1, p_2, p_3)$. The value $z \in \ZO^{\le \ell}$ stores the label in the edge from the node's parent, the values $p_1,p_2,p_3 \in \{0, 1, \ldots, 2 m\}$ are pointers to the (block storing the) parent, left child, and right  child, respectively, or $0$ if such an edge is absent. 

It follows that each memory block is $O(\ell + \log m)$ bits long. In this way, we will represent $R(S)$ by a binary string of length $O(m (\ell + \log m))$. The root node is stored in the first block, empty blocks will be set to $0$, and the only thing that needs to be specified is the \textit{memory layout}, namely, in which block does each node get stored. For this purpose, let $\tau:R(S) \to [2m]$ be an injective function, mapping the nodes of $R(S)$ to the $[2m]$ memory blocks, so that $\tau(\text{root}) = 1$. For any $S \subseteq \ZO^\ell$ of size $|S|\leq m$, we then let
\[
R_\tau(S) \in \ZO^{O(m (\ell + \log m))}
\]
denote the binary string obtained by encoding $R(S)$ as just described.

\paragraph{BJLM's quantum radix tree} We see now that although there is a unique radix tree $R(S)$ for each $S$, there is no obvious way of making sure that the representation of $R(S)$ in memory is also unique. However, this bijective correspondence between $S$ and its memory representation is a requirement for quantum algorithms to use interference. The idea of Bernstein et al \cite{Bernstein-Jeffery-Lange-Meurer-SubsetSum-2013}, then, is to represent $S$ using a superposition of \textit{all possible layouts}. I.e., $S$ is to be uniquely represented by the (properly normalized) quantum state:
\[
\sum_\tau \ket{R_\tau(S)}.
\]
The trick, then, is to ensure that this representation can be efficiently queried and updated. In their discussion of how this might be done, the BJLM paper \cite{Bernstein-Jeffery-Lange-Meurer-SubsetSum-2013} presents the broad idea but does not work out the details, whereas Jeffery's thesis \cite{Stacey-PhDThesis-2014} glosses over several details and includes numerous bugs and omissions. To make their idea work, we make use of an additional data structure.

\subsection{Prefix-Sum Tree}
\label{sec:PrefixSumTree}

In our implementation of the Quantum Prefix Tree, we will need to keep track of which blocks are empty and which are being used by a node. For this purpose, we will use a data-structure that is famously used to (near-optimally) solve  the dynamic prefix-sum problem.

\begin{definition}
\label{def:PrefixSumTree}
A \textit{prefix-sum tree} is a complete rooted binary tree. Each leaf node is labelled by a value in $\ZO$, and each internal node is labelled by the number of $1$-valued leaf nodes descending from it.

Let $F \subseteq [\ell]$ for $\ell$ a power of $2$. We use $P(F)$ to denote the prefix-sum tree where the $i^{th}$ leaf node of the tree is labelled by $1$ iff $i \in F$.
\end{definition}


A prefix-sum tree $P(F)$ will be represented in memory by an array of $\ell - 1$ blocks of memory, holding the labels of the inner nodes of $P(F)$, followed by $\ell$ bits, holding the labels of the leaf nodes. The blocks appear in the same order as a breadth-first traversal of $P(F)$. Consequently, for every $F \in \ZO^{\ell}$ there is corresponding binary string of length $(\ell-1)\log \ell + \ell$ that uniquely describes $P(F)$.

We will overload notation, and use $P(F)$ to denote this binary string of length $(\ell-1)\log \ell + \ell$.

\paragraph{Allocating and deallocating.}
The idea now is to use the prefix tree as an memory allocator. We have $2 m$ blocks of memory, and the set $F$ will keep track of which blocks of memory are unused, or ``free''. 

We would then like to have an operation that allocates one of the free blocks. To implement Bernstein et al's idea, the choice of which block to allocate is made in superposition over all possible free blocks. I.e., we would like to implement the following map $\unitaryAlloc$ and also its inverse, $\unitaryFree$.
\begin{equation}
\label{eq:UnitaryAlloc}
    \unitaryAlloc: 
    \ket{P(F)}\ket{0}  \ket{0}\rightarrow \frac{1}{\sqrt{|F|}}\sum_{i \in  F} \ket{P(F\setminus\{i\})}\ket{i} \ket{0},
\end{equation}
The second and third registers have $O(\log m)$ bits. We do not care for what the map does when these registers are non-zero, or when $F = \varnothing$. We will guarantee that this is never the case.

Note that each internal node of the prefix tree stores the number of elements of $F$ that are descendants to that node. In particular, the root stores $|F|$. In order to implement $\unitaryAlloc$, we then start by constructing the state
\begin{equation}
    \frac{1}{\sqrt{|F|}} \sum_{j=1}^{|F|}\ket{j}.\label{eq:rand-number-F}
\end{equation}
While this might appear to be simple, it actually requires us to use a gate
\begin{equation}
\label{eq:UnitarySupperpose}
\unitarySuperpose: \ket{k}\ket{0} \mapsto \frac{1}{\sqrt{k}} \sum_{j=1}^{k} \ket{k} \ket{j}.
\end{equation}

This is much like choosing a random number between $1$ and a given number $k$ on a classical computer. Classically, such an operation cannot be done exactly if all we have at our disposition are bitwise operations (since all achievable probabilities are then dyadic rationals). Quantumly, it is impossible to implement $\unitarySuperpose$ efficiently without error by using only the usual set of basic gates.

\label{claim:unitary-superpose} 
So the reader should take note: it is precisely this gate which adds error to BJLM's procedure. This gate can be implemented up to distance $\eps$ using $O(\log\frac{m}{\eps})$ basic gates, where $m$ is the maximum value that $k$ can take. I.e., using so many gates we can implement a unitary $U$ such that the spectral norm $\|U - \unitarySuperpose\| \le \eps$.\footnote{This is done by using Hadamard gates to get a superposition between $1$ and the smallest power of $2$ which is greater than $\frac{m}{\eps}$, and then breaking this range into $m$ equal intervals plus a remainder of size $< m$. The \textit{remainder subspace} will have squared amplitude $\le \eps$.} We will need to choose $\eps \approx \frac{1}{T}$, which is the inverse of the number of times such a gate will be used throughout our algorithm.

Once we have prepared state (\ref{eq:rand-number-F}), we may then use binary search, going down through the prefix tree to find out which location $i$ corresponds to the $j^{th}$ non-zero element of $F$. Using $i$, as we go up we can remove the corresponding child from $P(F)$, in $O(\gamma \cdot \log m)$ time, while updating the various labels on the corresponding root-to-leaf path. This requires the use of $O(\log m)$ work bits, which are $\ket{0}$ at the start and end of the operation.
During this process, the register holding $j$ is also reset to $\ket{0}$, by subtracting the element counts we encounter during the deletion process from this register.
The inverse procedure $\unitaryFree$ is implemented in a similar way.

\subsection{Quantum Radix Tree}

We may now define the quantum radix tree.

\begin{definition}[Quantum Radix Tree]
\label{def:QuantumRadixTree}
Let $\ell$ and $m$ be powers of $2$, $S \subseteq \ZO^\ell$ be a set of size $s=|S| \le m$, and let $R(S)$ be the classical radix tree storing $S$.
Then, the \textit{quantum radix tree} corresponding to $S$, denoted $\ket{R_Q(S)}$ (or $\ket{R_Q^{\ell,m}(S)}$ when $\ell$ and $m$ are to be explicit), is the state
\[
\ket{R_Q(S)} = \frac{1}{\sqrt{N_S}} \cdot \sum_{\tau} \ket{R_\tau(S)} \ket{P(F_\tau)},
\]
where $\tau$ ranges over all injective functions $\tau:R(S)\to [2m]$ with $\tau(\text{root}) = 1$, of which there are $N_S = \frac{(2m-1)!}{(2m-|R(S)|)!}$ many, and $F_\tau = [2m]\setminus \tau(R(S))$ is the complement of the image of $\tau$.
\end{definition}

\paragraph{Basic operations on quantum radix trees}
The basic allowed operations on a quantum radix trees are look-up and toggle, where the toggle operation is analogous to insertion and deletion in classical radix tree. Additionally, we also define a swap operation which will be used to simulate a RAG gate.

\begin{lemma}\label{thm:quantumRadixTree}
Let $\ket{\QRT{S}} = \ket{R_Q^{\ell,m}(S)}$ denote a quantum radix tree storing a set $S \subseteq \ZO^{\ell}$ of size at most $m$. We then define the following data structure operations.
\begin{enumerate}
    \item \texttt{Lookup.} Given an element $e \in \ZO^\ell$, we may check if $e \in S$, so for each $b \in \ZO$, we have the map
    \begin{equation*}
        \ket{e}\ket{\QRT{S}}\ket{b} \mapsto \ket{e}\ket{\QRT{S}}\ket{b \oplus (e \in S)}.
    \end{equation*}
    \item \texttt{Toggle.} Given $e \in \ZO^\ell$, we may add $e$ to $S$ if $S$ does not contain $e$, or otherwise remove $e$ from $S$. Formally,
    \begin{equation*}
        \ket{e} \ket{\QRT{S}} \mapsto
        \begin{cases}
            \ket{e} \ket{\QRT{S\cup \{e\}}}, & \text{if } e \notin S, \\
            \ket{e} \ket{\QRT{S\setminus \{e\}}}, & \text{if } e \in S. 
        \end{cases}
    \end{equation*}
    \item \texttt{Swap.} Given an element $e \in \ZO^{\ell}$, $b \in \ZO$ and a quantum radix tree storing a set $S$, we would like \texttt{swap} to be the following map,
    \begin{equation*}
        \ket{e}\ket{\QRT{S}}\ket{b} \mapsto
        \begin{cases}
            \ket{e}\ket{\QRT{S \cup \{e\}}}\ket{0}, & \text{ if } e \notin S \text{ and } b=1, \\
            \ket{e}\ket{\QRT{S \setminus \{e\}}}\ket{1}, & \text{ if } e \in S \text{ and } b=0, \\
            \ket{e}\ket{\QRT{S}}\ket{b}, & \text{otherwise.}
        \end{cases}
    \end{equation*}
\end{enumerate}
These operations can be implemented in worst case $O(\gamma \cdot \log m)$ time and will be error-free if we are allowed to use an error-free gate for $\unitarySuperpose$ (defined in Equation~\ref{eq:UnitarySupperpose}), along with other gates from set $\mathcal{Q}$.
\end{lemma}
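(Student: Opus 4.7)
The plan is to implement each operation as a classical reversible procedure on the underlying memory blocks which, when lifted to superposition, maps $\ket{\QRT{S}}$ to the appropriate $\ket{\QRT{S'}}$ while preserving the invariant that the encoded state is a uniform superposition over all valid layouts $\tau$. For each of the three operations I would first describe the classical algorithm acting on a single layout, then argue that each term $\ket{R_\tau(S)}\ket{P(F_\tau)}$ is handled in a way that jointly produces the desired uniform superposition, and finally account for the $O(\gamma\log m)$ time bound.

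Lookup is the simplest case. Starting from block~$1$ (the root) and the full query $e$, one walks down the tree by reading the current node's child pointers and edge labels into ancilla registers, comparing each edge label against the next chunk of $e$ with word-level operations, following the matching child, and stopping either at the first mismatch (output $0$) or on consuming all of $e$ at a leaf (output $1$); the answer is XORed into $\ket{b}$ and every ancilla is then uncomputed in reverse order. Since the procedure only reads memory blocks, each basis layout $\ket{R_\tau(S)}\ket{P(F_\tau)}$ is returned intact, so the uniform superposition over~$\tau$ is preserved exactly.

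Toggle is the main content. One first computes $c=(e\in S)$ into a fresh ancilla via Lookup. Conditioned on $c=0$ (insertion), one traverses until the next chunk of $e$ diverges from an edge label in the middle of an edge, calls $\unitaryAlloc$ on the prefix-sum register up to two times to obtain fresh free blocks, writes into them a new branching node and a new leaf with the appropriate $(z,p_1,p_2,p_3)$ tuples, and locally rewires the parent's child pointer; the corner case of inserting into an empty tree just uses a single allocation. Conditioned on $c=1$ (deletion), one runs the exact inverse: traverse to the leaf for $e$, clear the leaf and its now-redundant branching-node parent, reattach the surviving edge label to the grandparent, and return the two blocks to the free pool with $\unitaryFree$. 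The ancilla $c$ is finally uncomputed via a Lookup on the updated tree together with an $X$ gate, using $(e\in S')=1-c$. Uniformity over layouts is preserved because $\unitaryAlloc$ maps $\ket{P(F_\tau)}\ket{0}\ket{0}$ to $\tfrac{1}{\sqrt{|F_\tau|}}\sum_{i\in F_\tau}\ket{P(F_\tau\setminus\{i\})}\ket{i}\ket{0}$, and picking one or two free blocks uniformly is exactly the additional degree of freedom that takes a layout of $R(S)$ to a layout of $R(S\cup\{e\})$; deletion preserves uniformity because it is the exact inverse of insertion.

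Swap is then obtained by XORing $(e\in S)$ into a scratch qubit $a$ via Lookup, computing the ``active case'' flag $d=a\oplus b$, using $d$ to control the insert/delete body of Toggle together with a conditional flip of $b$, and finally uncomputing $d$ and $a$ via Lookup on the updated tree. The main obstacle throughout is the case-analysis in Toggle: one must check that the local pointer surgery really does map layouts of $R(S)$ bijectively to layouts of $R(S\cup\{e\})$ through the freshly chosen free block(s), and that all remaining blocks are left untouched so that their distribution passes through unchanged. Once this is verified, each operation uses $O(1)$ Lookups and at most two $\unitaryAlloc$/$\unitaryFree$ calls, each of which is an $O(\gamma\log m)$ walk on the prefix-sum tree, giving the claimed bound; the only source of error is the approximate $\unitarySuperpose$ inside $\unitaryAlloc$, so the whole procedure is error-free if that gate is supplied exactly.
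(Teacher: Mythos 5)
Your plan follows the paper's proof essentially step for step: \texttt{Lookup} by a read-only downward traversal that is then uncomputed, \texttt{Toggle} by combining $\unitaryAlloc$/$\unitaryFree$ on the prefix-sum tree with local pointer surgery (at most two nodes created or removed, uniformity over layouts preserved because allocation supplies exactly the extra degree of freedom), and \texttt{Swap} assembled from \texttt{Lookup} and a controlled \texttt{Toggle}, with $\unitarySuperpose$ as the sole error source and the same $O(\gamma\log m)$ accounting. The one implementation point the paper singles out as crucial but that you leave implicit is that the block index produced by $\unitaryAlloc$ must be erased from the work register by XORing it against the copy stored in the parent's child pointer (and symmetrically recovered there before calling $\unitaryFree$), since otherwise the ancillary registers are not returned to zero.
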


\begin{proof}
Let $\ket{b},\ket{e}$ denote the quantum states storing the elements $b \in \ZO$ and $e \in \ZO^\ell$, respectively. The data structure operations such as \texttt{lookup}, \texttt{toggle} and \texttt{swap} can be implemented reversibly in $O(\gamma \cdot \log m)$ time in the following way.

\paragraph{\texttt{Lookup}} We wish to implement the following reversible map $\unitaryLook$,
    \begin{equation}
    \label{eq:UnitaryLookQRT}
        \unitaryLook:\ket{e}\ket{\QRT{S}}\ket{b} \mapsto \ket{e}\ket{\QRT{S}}\ket{b \oplus (e \in S)}.
    \end{equation}
We do it as follows. First note that,
by Definition~\ref{def:QuantumRadixTree},
    \begin{equation*}
        \ket{\QRT{S}}=\frac{1}{\sqrt{N_S}} \sum_{\tau} \ket{R_{\tau}(S)}\ket{P(F_{\tau})}.
    \end{equation*}
We will traverse $R_{\tau}(S)$ with the help of some auxiliary variables. Starting at the root node, we find the edge labeled with a prefix of $e$. If no such label is found then $e$ is not present in $R_{\tau}(S)$. Otherwise, we traverse to the child reached by following the edge labeled by a prefix of $e$. Let us denote the label by $L$. If the child is a leaf node then terminate the process, stating that $e$ is present in $R_{\tau}(S)$, else, recurse the process on $e'$ and the tree rooted at that child node. Here $e'$ is the binary string after removing $L$ from $e$. When at some point we have determined whether $e \in S$ or not, we flip the bit $b$, or not. Eventually, we may conclude that $e \not \in S$ before traversing the entire tree, at which point we skip the remaining logic for traversing $R_{\tau}(S)$ downwards (by using a control qubit). After we have traversed $R_\tau(S)$ downwards and determined whether $e \in S$, we need to undo our traversal, which we do by following the $p_1$ pointers (to the parent nodes) until the root is again reached, and the auxiliary variables are again set to $0$.

Each comparison with the edge labels, at each traversed node, takes $O(\gamma)$ time. Hence, the entire procedure takes $O(\gamma \cdot \log m)$ time. 

\paragraph{\texttt{Toggle}} Let $\unitaryToggle$ denote the following map,
    \begin{equation}
        \unitaryToggle: \ket{e} \ket{\QRT{S}} \rightarrow
        \begin{cases}
            \ket{e} \ket{\QRT{S\setminus \{e\}}}, & \text{if } e \in S, \\
            \ket{e} \ket{\QRT{S\cup \{e\}}}, & \text{if } e \notin S 
        \end{cases}
    \end{equation}
    
The \texttt{toggle} operation primarily consists of two main parts: The memory allocation or de-allocation, followed by insertion or deletion, respectively. 

We again traverse $R_{\tau}(S)$ with the help of some auxiliary variables.
We start with the root node of $R_{\tau}(S)$, and traverse the tree downwards until we know, as above, whether $e \in S$ or not. If $e \notin S$, we will know where we need to insert nodes into $R_\tau(S)$, in order to transform it into $R_\tau(S \cup \{e\})$. Below, we will explain in detail how such an insertion must proceed. It turns out that we may need to insert either one node, or two, but never more. We may use the work qubits to compute the contents of the memory blocks that will hold this new node (or new nodes). These contents are obtained by XORing the appropriate bits of $e$ and the appropriate parent/child pointers of the nodes we are currently traversing in the tree.

We may then use the $\unitaryAlloc$ gate (once or twice) to obtain the indices of the blocks that will hold the new node(s). We then use RAG gates to swap in the contents of these blocks into memory. A fundamental and crucial detail must now be observed: the index of the memory blocks into where we inserted the new nodes is now left as part of the work qubits. This cannot be and must be dealt with, because every work bit must be again set to zero at the end of the procedure. However, a copy of this index now appears as the child pointer ($p_2$ or $p_3$) of the parents of the nodes we just created, and these pointers can thus be used to zero out the index. It is then possible to traverse the tree upwards in order to undo the various changes we did to the auxiliary variables.

If $e \in S$, on the other hand, we then do the inverse procedure. We will then know which nodes need to be removed from $R_\tau(S)$ (it will be either one or two nodes). By construction, these nodes will belong to blocks not in $F_\tau$. We begin by setting these blocks to zero by swapping the blocks into the workspace (using the RAG gate), XORing the appropriate bits of $e$ and the appropriate child/parent pointers so the blocks are now zero, and swapping them back. These blocks will then be set to zero, and we are left with a state akin to the right-hand side of (\ref{eq:UnitaryAlloc}). We then use the $\unitaryFree$ gate to \textit{free} the blocks, i.e., add their indices to $F_\tau$ once again. At this point we can traverse the tree upwards once more, in order to reset the auxiliary variables to zero, as required.

We now give further detail on how one must update $R_\tau(S)$ in order to insert a new element $e$ into $S$. We must create a node $N \coloneqq (z, p_1, p_2, p_3)$ corresponding to the element $e$ stored at the memory location assigned by $\unitaryAlloc$ procedure. Let us denote the address by $k$. Start with the root node of $R_{\tau}(S)$. If $e$ has no common prefix with any of the labels of the root's outgoing edges, which can only happen if the root has one child, then set $z$ to $e$, $p_1$ pointing to the root node, and, $p_2$ and $p_3$ set to $0$. Moreover, set the value of the root's $p_2$ pointer to $k$ if node $N$ ends up as the left child to the root, else set root's $p_3$ pointer to $k$. In the case when $e$ has a common prefix with one of the labels of the root's outgoing edges, let us denote the label by $L$ and the child node by $C$, then further two scenarios arise: Either label $L$ is completely contained in $e$, which if is the case then we traverse the tree down and run the insertion procedure recursively on $e'$ (which is $e$ after removing the prefix $L$) with the new root set $C$. In the case where label $L$ is not completely contained in $e$, we create an internal node $N'$ with its $z$ variable set to the longest common prefix of $e$ and $L$ (which we denote by $L'$), $p_1$ pointing to root, $p_2$ pointing to $C$ and $p_3$ pointing to $N$ (or vice versa depending on whether node $N$ gets to be the right or the left child). We run the $\unitaryAlloc$ procedure again to get a memory location to store $N'$. Having done that, we now change the $z$ value of node $C$ to be the prefix of $L$ after $L'$, and the $p_1$ value of node $C$ to be the memory location of $N'$. Additionally, we also set $z$ of node $N$ to be $e'$, the suffix of $e$ after $L'$, and we let $p_1, p_2, p_3$ to be, respectively, a pointer to $N'$, $0$ and $0$.\todo{image}

Each step in the traversal takes time $O(\gamma)$, for a total time of $O(\gamma \cdot \log m)$.

The procedure to update $R_{\tau}(S)$ in order to delete an element $e$ from $S$ is analogous to the insertion procedure mentioned above, which also can be implemented in $O(\gamma \cdot \log m)$ time.

\paragraph{\texttt{Swap}} Let $\unitarySwap$ denote the following map,
\begin{equation*}
        \unitarySwap:\ket{e}\ket{\QRT{S}}\ket{b} \mapsto
        \begin{cases}
            \ket{e}\ket{\QRT{S \cup \{e\}}}\ket{0}, & \text{ if } e \notin S \text{ and } b=1, \\
            \ket{e}\ket{\QRT{S \setminus \{e\}}}\ket{1}, & \text{ if } e \in S \text{ and } b=0, \\
            \ket{e}\ket{\QRT{S}}\ket{b}, & \text{otherwise.}
        \end{cases}
    \end{equation*}
To implement $\unitarySwap$, we first run the $\unitaryLook$ on the registers $\ket{e}$, $\ket{\QRT{S}}$ and $\ket{b}$. Conditional on the value of register $\ket{b}$ (i.e., when $b=1$), we run $\unitaryToggle$ on the rest of the registers. We then run $\unitaryLook$ again to attain the desired state. To summarize, the unitary $\unitarySwap=\unitaryLook \cdot \controlledToggle \cdot \unitaryLook$, where $\controlledToggle$ is controlled version of $\unitaryToggle$ (as per Lemma \ref{lem:controlled-gates}). Thus, the \texttt{swap} procedure takes a total time of $O(\gamma \cdot \log m)$.
\end{proof}

An error-less, efficient implementation of the unitary $\unitarySuperpose$ is impossible by using only the usual sets of basic gates.\todo{Substantiate this claim.} Furthermore, it is unreasonable to expect to have an error-free $\unitarySuperpose$ at our disposal. However, as we explained in page \pageref{claim:unitary-superpose}, there is a procedure to implement $\unitarySuperpose$ using gates from the gate set $\mathcal{B}=\{ \CNOT, H, S, T\}$ up to spectral distance $\eps$, using only $O(\log \frac{m}{\epsilon})$ gates.

\begin{corollary}\label{thm:OperationsRadixTreeError}
Let $\ket{\QRT{S}} = \ket{R_Q^{\ell,m}(S)}$ denote a quantum radix tree storing a set $S \subseteq \ZO^{\ell}$ of size at most $m$. The data structure operations \texttt{look-up}, \texttt{toggle} and \texttt{swap}, as defined in the statement of Lemma~\ref{thm:quantumRadixTree} can be implemented in $O(\gamma \cdot \log \frac{m}{\epsilon})$ time and $\epsilon$ probability of error using gates from the gate set $\mathcal{Q}$. Here $\gamma$ is the number of gates required from set $\mathcal{Q}$ to do various basic operations on a logarithmic number of qubits.
\end{corollary}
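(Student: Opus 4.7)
The plan is to take the ideal implementations of $\unitaryLook$, $\unitaryToggle$, and $\unitarySwap$ from the proof of Lemma~\ref{thm:quantumRadixTree}, isolate the sole source of error (the gate $\unitarySuperpose$), and use the approximation noted on page~\pageref{claim:unitary-superpose} to replace it with a circuit over $\mathcal{B}$ at the cost of a small, controllable error.

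First I would count the calls to $\unitarySuperpose$ per operation. The gate appears exclusively inside $\unitaryAlloc$ and its inverse $\unitaryFree$, each of which uses it once (to prepare $\frac{1}{\sqrt{|F|}}\sum_{j=1}^{|F|}\ket{j}$). Inspecting the proof of Lemma~\ref{thm:quantumRadixTree}: $\unitaryLook$ makes zero such calls; $\unitaryToggle$ allocates or frees at most two memory blocks (since a radix-tree insertion/deletion creates/removes at most two nodes), so it makes at most $2$ calls; and $\unitarySwap = \unitaryLook \cdot \controlledToggle \cdot \unitaryLook$ makes $O(1)$ calls. Hence in every operation $\unitarySuperpose$ is invoked a constant number of times.

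Next I would invoke the approximation noted earlier: for any target $\eps' > 0$, there is an implementation $\widetilde{U}_{\mathrm{sup}}$ over $\mathcal{B}$ using $O(\log(m/\eps'))$ gates with spectral distance $\|\widetilde{U}_{\mathrm{sup}} - \unitarySuperpose\| \le \eps'$. Substituting $\widetilde{U}_{\mathrm{sup}}$ at each of the $O(1)$ call sites and applying the standard telescoping bound $\|U_1\cdots U_k - V_1\cdots V_k\| \le \sum_i \|U_i - V_i\|$ shows that the modified circuit implements the intended operation up to spectral distance $O(\eps')$. Setting $\eps' = \Theta(\eps)$, each of $\unitaryLook$, $\unitaryToggle$, $\unitarySwap$ is realized with spectral error $\le \eps$, which upper-bounds the trace distance on every input state and therefore the statistical distance of any subsequent measurement outcome; this is the meaning of ``$\eps$ probability of error'' in the statement.

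Finally I would tally the time: the tree traversal costs $O(\log m)$ steps at $O(\gamma)$ per step, i.e., $O(\gamma \log m)$, and the $O(1)$ invocations of $\widetilde{U}_{\mathrm{sup}}$ add $O(\log(m/\eps))$ gates; since $\gamma \ge 1$, the sum simplifies to $O(\gamma \log(m/\eps))$, as claimed. The main point of care is the error-propagation step: I must argue cleanly that replacing $\unitarySuperpose$ by an $\eps'$-close unitary does not disturb the \emph{structure} of the downstream radix-tree manipulations (they remain unitaries acting on whatever state emerges, so the telescoping inequality applies without qualification), and that the spectral-to-trace-distance bound composes correctly when these operations are themselves used as subroutines in the main Theorem~\ref{thm:main}. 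Everything else reduces to bookkeeping on the constant-depth composition.
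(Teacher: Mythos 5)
Your proposal is correct and follows essentially the same route as the paper: the paper derives this corollary directly from Lemma~\ref{thm:quantumRadixTree} together with the remark that $\unitarySuperpose$ (the sole source of error, used $O(1)$ times per operation inside $\unitaryAlloc$/$\unitaryFree$) can be realized over $\mathcal{B}$ to spectral distance $\eps$ with $O(\log\frac{m}{\eps})$ gates. Your explicit call-counting and telescoping error bound simply spell out what the paper leaves implicit, so there is nothing to correct.
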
   

\subsection{The simulation}
\label{sec:Simulation}

Recall from Section \ref{sec:Definitions-gamma} that we take $\gamma$ to be the number of gates required to do various basic operations on a logarithmic number of qubits. In our use below, it never exceeds $O(\log M)$.

\begin{thm}
Let $T$, $W$, $m < M = 2^\ell$ be natural numbers, with $M$ and $m$ both powers of $2$, and let $\eps \in [0, 1/2)$. Suppose we are given an $m$-sparse QRAM algorithm using time $T$, $W$ work qubits and $M$ memory qubits, that computes a Boolean relation $F$ with error $\eps$.

Then we can construct a QRAM algorithm which computes $F$ with error $\eps' > \eps$, and runs in time $O(T \cdot \log(\frac{ T}{\eps' - \eps}) \cdot \gamma)$, using $W + O(\log M)$ work qubits and $O(m \log M)$ memory qubits.
\end{thm}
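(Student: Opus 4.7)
The plan is to construct the compressed algorithm $\circuitDesc{C}'$ by replacing the $M$ memory qubits with a quantum radix tree register holding $\ket{\QRT{S_v}}$, where $S_v = \{i \in [M] : v_i = 1\}$. The invariant to maintain is precisely Equation~(\ref{eq:StateEvolutionSuccinctAlgorithm}): if the original algorithm has state $\ket{\psi_t} = \sum_{u,v} \alpha^{(t)}_{u,v}\ket{u}\ket{v}$, then the simulator's state is $\ket{\phi_t} = \sum_{u,v} \alpha^{(t)}_{u,v}\ket{u}\ket{\QRT{S_v}}$. Since the radix-tree representation uses $O(m(\ell + \log m)) = O(m \log M)$ qubits and the accompanying prefix-sum tree uses $O(m\log m)= O(m \log M)$ qubits, the total memory is $O(m \log M)$; the traversal routines need only $O(\log M)$ additional work qubits, one block to hold the $\RAG$ address and a few more for auxiliary traversal variables.

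Next, I would go instruction by instruction through $C_1, \ldots, C_T$ and define the corresponding $C'_t$ acting on the succinct representation. In the $\QRAM$ model every $\APPLY$ and every $\READ$ instruction touches only work qubits, and may therefore be copied verbatim into $\circuitDesc{C}'$ (the work register layout is unchanged except for the extra $O(\log M)$ auxiliary qubits, which always start and return to $\ket 0$ outside of an instruction). The only nontrivial case is the $\RAG$ gate, which, under the $\QRAM$ restrictions, swaps the single memory bit indexed by the first $\log M$ work qubits with the $(\log M+1)$-st work qubit. This is precisely the action of $\unitarySwap$ from Lemma~\ref{thm:quantumRadixTree} applied to $\ket{i}\ket{\QRT{S_v}}\ket{b}$, because toggling membership of $i$ in $S_v$ is bit-for-bit the same transformation on $\ket{v}$ as flipping the $i$-th memory bit. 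Substituting $\unitarySwap$ for every $\RAG$ therefore preserves the invariant $\ket{\phi_t}\leftrightarrow\ket{\psi_t}$ in the ideal (error-free) case.

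It remains to handle error and time. The only source of error is $\unitarySuperpose$ hidden inside the radix-tree $\unitaryAlloc$/$\unitaryFree$ calls used by $\unitarySwap$; all other primitives are exact. By Corollary~\ref{thm:OperationsRadixTreeError}, each $\unitarySwap$ can be implemented within spectral distance $\eps_1$ in time $O(\gamma \log(m/\eps_1))$. Setting $\eps_1 = (\eps'-\eps)/T$ and using the fact that spectral distances of unitaries are subadditive under composition, the overall simulation is within distance $T\eps_1 = \eps'-\eps$ of the ideal one; combined with the original error $\eps$ the final measurement is correct with probability $\ge 1-\eps'$. Because the initial memory is all-zero and each $\RAG$ can change the memory's Hamming weight by at most $1$, sparsity implies $m \le T$, so $\log(m/\eps_1)=O(\log(T/(\eps'-\eps)))$; summing over at most $T$ instructions yields the claimed time $O(T\cdot\log(T/(\eps'-\eps))\cdot\gamma)$.

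The step I expect to be the main obstacle is verifying that $\unitarySwap$ really implements $\RAG$ under the encoding $\ket{v}\leftrightarrow\ket{\QRT{S_v}}$ across the full superposition over layouts $\tau$ that defines the quantum radix tree. The bijection between sparse vectors $v$ and the pure states $\ket{\QRT{S_v}}$ is by construction (Definition~\ref{def:QuantumRadixTree}), so this reduces to checking that the three branches of $\unitarySwap$ cover exactly the two possible effects of $\RAG$ on the corresponding computational basis state, and that the uniform superposition over layouts maps to the uniform superposition on the updated set — which is precisely why $\unitaryAlloc$ was designed to pick a free block in superposition. Everything else is bookkeeping.
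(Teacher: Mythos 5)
Your proposal is correct and follows essentially the same route as the paper: represent the sparse memory register by the quantum radix tree $\ket{\QRT{S_v}}$, copy all work-register instructions verbatim, replace each $\RAG$ by $\unitarySwap$ (whose three cases match exactly the action of $\RAG$ on the addressed bit), and then budget the error of the inexact $\unitarySuperpose$ as $\Theta((\eps'-\eps)/T)$ per gate via Corollary~\ref{thm:OperationsRadixTreeError}. Your added remark that one may assume $m\le T$ (so $\log(m/\eps_1)=O(\log(T/(\eps'-\eps)))$) is a small but welcome tightening of a parameter point the paper leaves implicit.
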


\begin{proof}
Let $\circuitDesc{C}=(n, T, W, M, C_1, \ldots, C_T)$ be the circuit of the given $m$-sparse $\QRAM$ algorithm computing a relation $F$ with error $\eps$ and, let the state of the algorithm at every time-step $t$, when written in the computational basis be
\begin{equation}
\ket{\psi_t}=\sum_{u \in \ZO^w} \sum_{v \in \binom{[M]}{\le m}} \alpha^{(t)}_{u,v} \cdot \underbrace{\ket{u}}_{\text{W qubits}}\otimes \underbrace{\ket{v}}_{\text{M qubits}}
\end{equation}
where the set $\binom{[M]}{\le m}$ denotes all vectors $v \in \ZO^M$ such that $|v| \le m$. Using the description of $\circuitDesc{C}$ and the fact that this algorithm is $m$-sparse we will now construct another $\QRAM$ algorithm $\circuitDesc{C}'$ with the promised bounds. 
The algorithm $C'$ will have $w' = W + O(\log M)$ work bits, and $O(m \log M)$ memory bits. The memory is to be interpreted as an instance $\ket{R_Q(S)}$ of the quantum radix tree described above. Then $\ket{v}$ will be represented by the quantum radix tree $\ket{R_Q(S_v)}$, where $S_v = \{ i \in [M] \mid v_i = 1 \}$ is the set of positions where $v_i = 1$, so that each position $i \in [M]$ is encoded using a binary string of length $\ell$.

The simulation is now simple to describe. First, the quantum radix tree is initialized. Then, each non-RAG instruction $C_i \in \circuitDesc{C}$ operating on the work qubits of $\circuitDesc{C}$ is applied in the same way in $\circuitDesc{C}'$ to same qubits among the first $W$ qubits of $\circuitDesc{C}'$. Each RAG instruction, on the other hand, is replaced with the $\unitarySwap$ operation, applied to the the quantum radix tree. The extra work qubits of $\circuitDesc{C}'$ are used as anciliary for these operations, and we note that they are always returned to zero.

If we assume that the $\unitarySwap$ operation can be implemented without error, we then have a linear-space isomorphism between the two algorithms' memory space, which maps the state $\ket{\psi_t}$ of $\circuitDesc{C}$ at each time step $t$ to the state $\ket{\phi_t}$ of $\circuitDesc{C}'$ after $t$ simulated steps:
\[
\ket{\phi_t} = \sum_{u,v} \alpha^{(t)}_{u,v} \cdot \underbrace{\ket{u}}_{W} \otimes \underbrace{\ket{0}}_{O(\log M)} \otimes \underbrace{\ket{R_Q(S_v)}}_{O(m \log M)}.
\]
Thus, if $\unitarySwap$ could be implemented without error, we could have simulated $\circuitDesc{C}$ without additional error.
Otherwise, as per Corollary~\ref{thm:OperationsRadixTreeError}, we may implement the $\unitarySwap$ unitary with an error parameter $\Omega(\frac{\eps' - \eps}{T})$, resulting in a total increase in error of $\eps' - \eps$, and an additional time cost of $O(T \log\frac{T}{\eps' - \eps})$.
\end{proof}

\section{Simplifications of previous work}\label{sec:Simplifications}

It is possible to use our main theorem to simplify the presentation of the following three results: Ambainis' Quantum Walk algorithm for solving the $k$-$\elementDistinctness$ problem~\cite{Ambainis-ElementDistinctness-2004}, Aaronson et al's Quantum algorithms for the Closest Pair problem ($\closestPair$), and the authors' previous paper on Fine-Grained Complexity via Quantum Walks~\cite{buhrman-3SUMhard-2021}.
 
All these results use quantum walk together with complicated, space-efficient, history-independent data structures. As we will see, it is possible to replace these complicated data structures with simple variants of the prefix-sum tree (Section \ref{sec:PrefixSumTree}), where the memory use is sparse, and then invoke the main theorem of our paper.

\subsection{Ambainis' Walk Algorithm for Element Distinctness}\label{sec:ElementDistinctness}
Ambainis' description and analysis of his data structure is complicated, and roughly 6 pages long, whereas a presentation of his results with a simple data structure and an appeal to our theorem requires less than 2 pages, as we will now see. Also, the presentation of the algorithm is considerably muddled by the various difficulties and requirements pertaining to the more complicated data structure. In a presentation of his results that would then appeal to Theorem \ref{thm:main}, we have a very clear separation of concerns.

Ambainis' algorithm is a $\tO(n^{\frac{k}{k+1}})$-time solution to the following problem:

\begin{definition}[$k$-Element Distinctness]
Given a list $L$ of $n$ integers in $\Sigma$ are there $k$ elements $x_{i_1}, \ldots, x_{i_k} \in L$ such that $x_{i_1}=\dots=x_{i_k}$.
\end{definition}

Ambainis' algorithm for $k$-Element Distinctness \cite{Ambainis-ElementDistinctness-2004} is quantum walk algorithm on a Johnson graph $J(n,r)$ with $r=n^{k/k+1}$ and runs in $\tO(n^{k/k+1})$ time. The crucial ingredient in making the algorithm time efficient is the construction of data-structure which can store a set $S \subseteq [n] \times \Sigma$ of elements of size $r$, under efficient insertions and removals, so that one may efficiently query at any given time whether there exist $k$ elements $(i_1, x_1), \ldots, (i_k, x_k)$ in $S$ with distinct indices $i_1, \ldots, i_k$ but equal labels $x_1 = \dots = x_k$. Ambainis makes use of skip-lists and hash tables, ensuring that all operations run in $O(\log^4(n+|\Sigma|))$ time. However, if one does not care about space-efficiency, there is a much simpler data structure that serves the same purpose. The following definition is illustrated in Figure~\ref{fig:AugPrefixTree}.

\begin{definition}\label{def:AugmentedPrefixSum}
Let $S \subseteq [n] \times \Sigma$, with $|S| = r$ and $|\Sigma| = n^{O(1)}$ a power of $2$, and such that every $i \in [n]$ appears in at most one pair $(i,x) \in S$.
The \textit{$k$-element-distinctness tree} that represents $S$, denoted $\kedT(S)$, is a complete rooted binary tree with $|\Sigma|$ leaves. Each leaf node $x \in \Sigma$ is labeled by a bit vector $\bitVector_x \in \ZO^n$ and a number $\countSol_x \in \{0, \ldots, n\}$, so that $\bitVector_x[i] = 1$ iff $(i,x) \in S$, and the $\countSol_x$ is the Hamming weight of $\bitVector_x$. Each internal node $w$ is labeled by a bit $\flag_w \in \ZO$ which indicates whether there exists a leaf $x$, descendent of $w$, with $\countSol_x \ge k$.
\end{definition}

\paragraph{Memory Representation} A $k$-element-distinctness tree is represented in the memory by an array of $|\Sigma| - 1$ bits of memory holding the flags of the internal nodes, followed by $|\Sigma|$ blocks of $n + \lceil \log n \rceil$ bits of memory each, holding the labels of the leaf nodes. The blocks appear in the same order as a breadth-first traversal of $\kedT(S)$. Consequently, for every $S \subseteq [n] \times \Sigma$ there is a corresponding binary string of length $|\Sigma|-1 + (n + \lceil \log n \rceil)|\Sigma|$ that uniquely encodes $\kedT(S)$. Crucially, if $|S| = r$, then at most $O(r (\log \Sigma + \log n))$ of these bits are $1$. So for $|\Sigma| = \poly(n)$, the encoding is $\tO(r)$-sparse.

\begin{figure}[h]
    \centering
    \scalebox{0.8}{
\begin{tikzpicture}[scale=0.5]
\draw (0,0) node 
  [shape=circle, minimum height=1.5cm, draw=black]
 (Level1){\texttt{flag}};
 
\draw (0,-4.5) node 
  [shape=circle, minimum height=1.5cm, draw=black]
 (Level21){\texttt{flag}};
\draw (8,-4.5) node 
  [shape=circle, minimum height=1.5cm, draw=black]
 (Level22){\texttt{flag}};
\draw (-8,-4.5) node 
  [shape=circle, minimum height=1.5cm, draw=black]
 (Level23){\texttt{flag}};

\draw (2,-9) node 
  [shape=circle, minimum height=1.5cm, draw=black]
 (Level31){\texttt{flag}};
\draw (-2,-9) node 
  [shape=circle, minimum height=1.5cm, draw=black]
 (Level32){\texttt{flag}};
\draw (6,-9) node 
  [shape=circle, minimum height=1.5cm, draw=black]
 (Level33){\texttt{flag}};
\draw (-6,-9) node 
  [shape=circle, minimum height=1.5cm, draw=black]
 (Level34){\texttt{flag}};
\draw (10,-9) node 
  [shape=circle, minimum height=1.5cm, draw=black]
 (Level35){\texttt{flag}};
\draw (-10,-9) node 
  [shape=circle, minimum height=1.5cm, draw=black]
 (Level36){\texttt{flag}};
 
\draw (0,-12.5) node 
  [shape=rectangle, minimum height=0.75cm, minimum width=2cm, draw=black]
 (BitVector0){$0$};
\draw (4,-12.5) node 
  [shape=rectangle, minimum height=0.75cm, minimum width=2cm, draw=black]
 (BitVector1){$1$};
\draw (8,-12.5) node 
  [shape=rectangle, minimum height=0.75cm, minimum width=2cm, draw=black]
 (BitVectorDash){...};
\draw (12,-12.5) node 
  [shape=rectangle, minimum height=0.75cm, minimum width=2cm, draw=black]
 (BitVectorNcube){$n^3$};
\draw (-4,-12.5) node 
  [shape=rectangle, minimum height=0.75cm, minimum width=2cm, draw=black]
 (BitVectorMinus1){$-1$};
\draw (-8,-12.5) node 
  [shape=rectangle, minimum height=0.75cm, minimum width=2cm, draw=black]
 (BitVectorMinusDash){...};
\draw (-12,-12.5) node 
  [shape=rectangle, minimum height=0.75cm, minimum width=2cm, draw=black]
 (BitVectorMinusNcube){$-n^3$};

\draw [line width=1pt,->] (Level35) -- (BitVectorNcube); 
\draw [line width=1pt,->] (Level36) -- (BitVectorMinusNcube);
\draw [line width=1pt,->] (Level35) -- (BitVectorDash); 
\draw [line width=1pt,->] (Level36) -- (BitVectorMinusDash);
\draw [line width=1pt,->] (Level33) -- (BitVectorDash); 
\draw [line width=1pt,->] (Level34) -- (BitVectorMinusDash);
\draw [line width=1pt,->] (-6,-10.45) -- (BitVectorMinusDash);
\draw [line width=1pt,->] (6,-10.45) -- (BitVectorDash);
\draw [line width=1pt,->] (Level32) -- (BitVectorMinus1);
\draw [line width=1pt,->] (Level32) -- (BitVector0);
\draw [line width=1pt,->] (Level31) -- (BitVectorDash);
\draw [line width=1pt,->] (Level31) -- (BitVector1);

\draw [line width=1pt,->] (Level21) -- (Level31);
\draw [line width=1pt,->] (Level21) -- (Level32);
\draw [line width=1pt,->] (Level23) -- (Level34);
\draw [line width=1pt,->] (Level23) -- (Level36);
\draw [line width=1pt,->] (Level22) -- (Level33);
\draw [line width=1pt,->] (Level22) -- (Level35);
\draw [line width=1pt,dotted, ->] (1,-1) -- (2.5,-2.5);
\draw [line width=1pt,dotted, ->] (-1,-1) -- (-2.5,-2.5);

\draw [line width=1pt, -] (2,-13.25) -- (4.5,-13.75);
\draw [line width=1pt, -] (-2,-13.25) -- (-4.5,-13.75);

\draw (-3,-14.5) node 
 [shape=rectangle, minimum height=0.75cm,  minimum width=1.5cm, draw=black]
 (BitVectorDescription2){$\countSol_0$};
\draw (1.5,-14.5) node 
  [shape=rectangle, minimum height=0.75cm,  minimum width=1cm, draw=black]
 (BitVectorDescription31){$...$};
\draw (3.5,-14.5) node 
  [shape=rectangle, minimum height=0.75cm, minimum width=1cm, draw=black]
 (BitVectorDescription32){$0/1$};
\draw (3.5,-16) node 
  [shape=rectangle, minimum height=0.75cm, minimum width=1cm]
 (BitVectorDescription42){$n$}; 
 
\draw (-0.5,-14.5) node 
  [shape=rectangle, minimum height=0.75cm, minimum width=1cm, draw=black]
 (BitVectorDescription33){$0/1$};
\draw (-0.5,-16) node 
  [shape=rectangle, minimum height=0.75cm, minimum width=1cm]
 (BitVectorDescription43){$1$};
\end{tikzpicture}
}
    \caption{Data structure for the $k$-Element Distinctness problem.} 
    \label{fig:AugPrefixTree}
\end{figure}
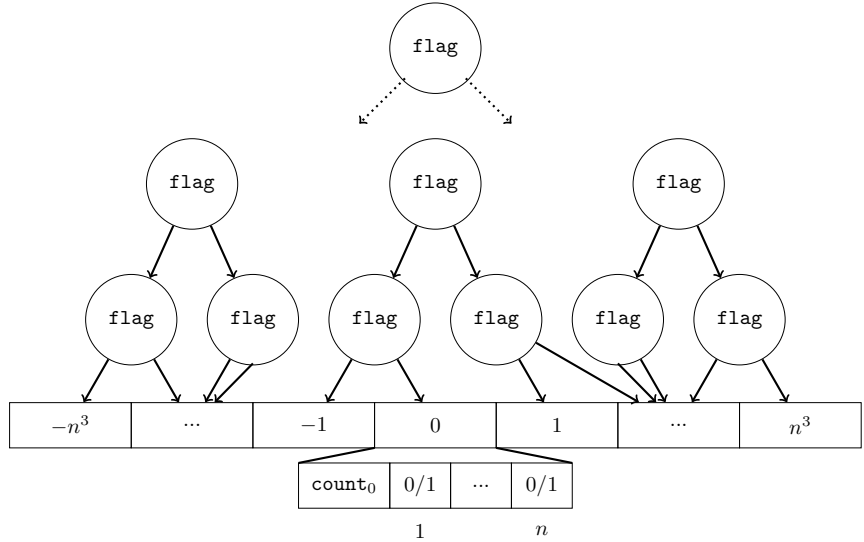

\paragraph{Implemention of data structure operations} It is clear from the definition of $k$-element-distinctness tree and its memory representation that a tree $\kedT(S)$ represents a set $S \subseteq [n]\times \Sigma$ in a history-independent way. We will now argue that all the required data structure operations take $O(\log n)$ time in the worst case. Let $(i, x)$ denote an element in $[n]\times \Sigma$. 
\begin{itemize}
    \item \textbf{Insertion} To insert $(i,x)$ in the tree, first increase the value of the $\countSol$ variable of the leaf $x$, and set $\bitVector_x[i]=1$. Then, if $\countSol \ge k$, set $\flag_w = 1$ for all $w$ on the root-to-$x$ path. This update requires $O(\log n)$ time as $|\Sigma| = \poly(n)$. 
    
    \item \textbf{Deletion} The procedure to delete is similar to the insertion procedure. To delete $(i,x)$ in the tree, first decrease the value of the $\countSol_x$ and set $\bitVector[i] = 0$. If $\countSol < k$, then, for all $w$ on the root-to-$x$ path which do not have both children $w_0, w_1$ with $\flag_{w_0} = \flag_{w_1} = 1$, set $\flag_w = 0$. This requires $O(\log n)$ time. 
    
    \item \textbf{Query} To check if the tree has $k$ distinct indices with the same $x$, we need only check if $\flag_{\text{root}}=1$, which takes $O(1)$ time.
\end{itemize}

\paragraph{Runtime, error and memory usage} Using the above data-structure, the runtime of Ambainis' algorithm is now $\tO(n^{\frac{k}{k+1}})$ time. The total memory used is $O(n|\Sigma|)$ bits. However, note that at any point of time in any branch of computation Ambainis' walk algorithm stores sets of size $r = O(n^{\frac{k}{k+1}})$. Hence their algorithm with this data structure is a $\tO(n^{\frac{k}{k+1}})$-sparse algorithm. Thus, invoking Theorem~\ref{thm:main} we conclude the following.

\begin{corollary}
There is a bounded-error $\QRAM$ algorithm that computes $k$-Element Distinctness in $\tO(n^{k/{k+1}})$ time using $\tO(n^{k/{k+1}})$ memory qubits.
\end{corollary}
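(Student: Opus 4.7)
The plan is to instantiate Ambainis' quantum walk on the Johnson graph $J(n,r)$ with $r = n^{k/(k+1)}$ using the $k$-element-distinctness tree $\kedT(S)$ described above as the history-independent data structure, and then to invoke Theorem~\ref{thm:main} as a black box to compress the resulting sparse algorithm. First I would verify that the walk still has the correct spectral parameters and hitting-time analysis: since these parts of Ambainis' argument depend only on the abstract walk on $J(n,r)$ and not on the internal encoding of the data structure, they carry over verbatim, giving a $\tO(n^{k/(k+1)})$ query-walk algorithm whose setup, update and check costs are each dominated by a constant number of calls to insertion, deletion, and a single flag read on $\kedT(S)$.

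Next I would argue that each data-structure operation is implementable as a proper \QRAM{} routine on the flat memory layout of $\kedT(S)$, in worst-case time $O(\log n)$, and in a manner that depends only on $S$ (i.e.\ is history-independent). For insertion of $(i,x)$, the routine walks down the fixed root-to-$x$ path of the complete binary tree using random-access gates, flips $\bitVector_x[i]$, increments $\countSol_x$, and then, conditional on $\countSol_x \ge k$, sets $\flag_w \leftarrow 1$ along the path back to the root; deletion is the reverse, setting $\flag_w \leftarrow 0$ only where the condition ceases to hold, which is checked locally from the children's flags. The query is a single read of $\flag_{\text{root}}$. Since the memory layout of $\kedT(S)$ is a fixed function of $S$ (not of the history of insertions), these operations act as proper unitaries on the computational basis states $\ket{\kedT(S)}$, which is what is required for the walk to exhibit interference correctly.

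Then I would perform the sparsity accounting. The memory representation of $\kedT(S)$ has length $|\Sigma| - 1 + (n + \lceil \log n \rceil)|\Sigma|$ bits, which is $M = \poly(n)$, and the number of $1$-bits is at most $O(r(\log|\Sigma| + \log n)) = \tO(r)$: each element of $S$ contributes a single bit to some $\bitVector_x$, at most $\log n$ bits to the binary encoding of one $\countSol_x$, and at most $\log|\Sigma|$ flags along one root-to-leaf path. Throughout every branch of the walk the stored set has size exactly $r$, so the algorithm is $\tO(r)$-sparse with $r = n^{k/(k+1)}$ and $M = \poly(n)$, and its total running time is $T = \tO(n^{k/(k+1)})$. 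Applying Theorem~\ref{thm:main} with these parameters yields a \QRAM{} algorithm of time $\tO(T \cdot \log M) = \tO(n^{k/(k+1)})$ and memory $O(m \log M) = \tO(n^{k/(k+1)})$, with any desired constant error, which is exactly the claimed corollary.

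I expect the main obstacle to be the history-independence/uniqueness check: one must be sure that the unitary implementing an insertion followed by a deletion (or two insertions in different orders) returns the memory register to the same basis state, so that interference in the quantum walk is preserved. With the complete-binary-tree layout this reduces to observing that the layout is a deterministic function of $S$ and that each update only overwrites bits whose final values depend solely on $S$, but this needs to be stated carefully; everything else (timing, sparsity, invocation of Theorem~\ref{thm:main}) is routine bookkeeping.
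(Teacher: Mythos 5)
Your proposal is correct and follows essentially the same route as the paper: replace Ambainis' space-efficient data structure with the $k$-element-distinctness tree $\kedT(S)$, check that insertion, deletion and the flag query run in $O(\log n)$ time on a history-independent (layout-is-a-function-of-$S$) encoding, observe the encoding is $\tO(r)$-sparse with $r = n^{k/(k+1)}$ inside $M=\poly(n)$ memory, and invoke Theorem~\ref{thm:main}. The only quibble is your claim that the setup cost is ``a constant number of calls''---the walk's setup requires $r$ insertions---but this is still $\tO(n^{k/(k+1)})$ and does not affect the conclusion.
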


\subsection{Quantum Algorithms for Closest-Pair and related Problems}
\label{sec:AaronsonClosestPair}
The paper of Aaronson et al \cite{Aaronson-ClosestPair-2019} provides quantum algorithms and conditional lower-bounds for several variants of the Closest Pair problem ($\closestPair$).

Let $\Delta(a,b) = \|a-b\|$ denote the Euclidean distance. We then describe the Closest Pair problem under Euclidean distance $\Delta$, but we could have chosen any other metric $\Delta$ in $d$-dimensional space which is strongly-equivalent to the Euclidean distance (such as $\ell_p$ distance, Manhattan distance, $\ell_\infty$, \textit{etc}).

\begin{definition}[Closest Pair ($\CP(n,d)$) problem]
In the $\CP(n,d)$ problem, we are given a list $P$ of $n$ distinct points in $\mathcal{R}^d$, and wish to output a pair $a,b \in P$ with the smallest $\Delta(a,b)$.
\end{definition}


We may also define a threshold version of $\CP$.

\begin{definition}
In the $\TCP(n,d)$ problem, we are given a set $P = \{p_1, \ldots, p_n\}$ of $n$ points in $\mathbb{R}^d$ and a threshold $\eps \ge 0$, and we wish to find a pair of points $a,b \in P$ such that $\Delta(a,b) \leq \eps$, if such a pair exists.
\end{definition}

For simplicity, so we may disregard issues of representation of the points, we assume that all points are specified using $O(\log n)$ bits of precision. By translation, we can assume that all the points lie in in the integer hypercube $[L]^d$ for some $L = \poly(n)$, and that $\delta \in [L]$, also. 

It is then possible to solve $\CP$ by running a binary search over the (at most $n^2$) different values of $\delta \in \{\Delta(p_i, p_j) \mid i,j \in [n]\}$ and running the corresponding algorithm for $\TCP$. This will add an additional $O(\log n)$ factor to the running time.

The $\TCP(n,d)$ problem is a query problem with certificate complexity $2$. If one is familiar with quantum walks, it should be clear that we may do a quantum walk on the Johnson graph over $n$ vertices, to find a pair with the desired property, by doing $O(n^{2/3})$ queries to the input. Again, if one is familiar with quantum walks, one will realize that, in order to implement this walk efficiently, we must dynamically maintain a set $S \subseteq [n]$, and at each step in the quantum walk, we must be able to add or remove an element $i$ to $S$, and answer a query of the form: does there exist a pair $i,j \in S$ with $\Delta(p_i, p_j) \le \eps$?

The only difficulty, now, is to implement an efficient data structure that can dynamically maintain $S$ in this way, and answer the desired queries, while being time and space efficient. Aaronson et al construct a data-structure which can store a set $S \subseteq [n] \times [L]^d$ of points of size $r$, under efficient insertions and removals, so that one may query at any given time whether there exist two points in $S$ which are $\eps$-close. They do so by first discretizing $[L]^d$ into a hypergrid of width $\eps/\sqrt{d}$, as explained below, and then use a hash table, skip list, and a radix tree to maintain the locations of the points in the hypergrid. 

The presentation of the data structure in the paper is roughly 6 pages long, and one must refer to Ambainis' paper for the error analysis, which is absent from the paper. As we will see, a simple, sparse data structure for the same purpose can be described in less than 2 pages, and then an appeal to Theorem \ref{thm:main} gives us the same result up to log factors. 

\paragraph{Discretization} We discretize the cube $[L]^d$ into a hypergrid of width $w = \frac{\eps}{\sqrt d}$, and let $\fid(p)$ denote the box containing $p$ in this grid. I.e., we define a function $\fid(p):[L]^d \rightarrow \ZO^{\lceil d \log(L/\eps) \rceil}$ by
\begin{equation}
    \fid(p)=(\lfloor p(1)/w \rfloor, \ldots, \lfloor p(d)/w \rfloor)\tag*{(represented in binary).}
\end{equation}

Let $\Sigma=\ZO^{\lceil d \log(L/\eps) \rceil}$ denote the set of all possible boxes. We say that two boxes $g,g'\in\Sigma$ are neighbours if 
\begin{equation*}
    \sqrt{\sum_{i=1}^d \norm{g(i)-g'(i)}^2} \leq \sqrt{d}.
\end{equation*}
A loose estimate will show there can be at most $(2\sqrt{d}+1)^d$ neighbours for any box.
This method of discretization ensures the following crucial property: 
\begin{observation}[Observation~45 \cite{Aaronson-ClosestPair-2019}]\label{obs:boxes} Let $p, q$ be any two distinct points in $[0,L]^d$, then
\begin{enumerate}
    \item if $\fid(p) = \fid(q)$, then $\Delta(p, q) \leq \eps$, and
    \item if $\Delta(p, q) \leq \eps$, then $\fid(p)$ and $\fid(q)$ are neighbours.
\end{enumerate}
\end{observation}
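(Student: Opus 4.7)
The plan is to verify both parts of the observation directly from the definition of $\fid$ and from the fact that each box in the grid is a hypercube of side $w = \eps/\sqrt d$, and therefore has Euclidean diameter exactly $\sqrt d \cdot w = \eps$.

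For part (1), I would note that $\fid(p) = \fid(q)$ just says that $p$ and $q$ lie in the same box, so in each coordinate $i \in [d]$ one has $|p(i) - q(i)| < w$. Summing the $d$ coordinatewise bounds and invoking Pythagoras gives
\[
\Delta(p,q) \;=\; \sqrt{\sum_{i=1}^d (p(i)-q(i))^2} \;<\; \sqrt{d \cdot w^2} \;=\; \sqrt d \cdot \frac{\eps}{\sqrt d} \;=\; \eps,
\]
which is the claim.

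For part (2), my plan is a triangle-inequality argument through the centres of the two boxes. Writing $g = \fid(p)$, $g' = \fid(q)$, and letting $c_g, c_{g'} \in \mathbb{R}^d$ be the corresponding box centres (so $c_g(i) = (g(i) + 1/2)\,w$), the half-diameter bound from part (1) gives $\Delta(p, c_g) \le \eps/2$ and $\Delta(q, c_{g'}) \le \eps/2$. Combining with the hypothesis $\Delta(p,q) \le \eps$ and the triangle inequality yields $\Delta(c_g, c_{g'}) \le 2\eps$. Since $c_g - c_{g'} = w \cdot (g - g')$ as real vectors, dividing by $w$ converts this into the grid-distance bound $\sqrt{\sum_i (g(i) - g'(i))^2} \le 2\eps/w = 2\sqrt d$, so $g$ and $g'$ satisfy the neighbour relation (up to an absolute constant absorbed in the definition).

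I do not foresee a genuine obstacle: both parts are short geometric facts about the box structure and come out by direct computation. The only point that needs mild care is the constant factor in part~(2); the centre-of-box argument loses a factor of $2$ compared to the literal statement $\sqrt{\sum_i (g(i)-g'(i))^2} \le \sqrt d$, and if one wants to hit this constant on the nose one should replace the triangle inequality step by a more careful coordinatewise accounting of how the floor function can shift the grid index relative to $(p(i)-q(i))/w$, which is routine but slightly fiddly. The companion statement that every box has at most $(2\sqrt d + 1)^d$ neighbours is then immediate, since each coordinate of $g - g'$ takes an integer value in a range of length at most $2\sqrt d + 1$.
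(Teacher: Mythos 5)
The paper itself gives no proof of this observation (it is imported verbatim from Aaronson et al.), so your write-up can only be judged on its own terms. Part (1) is correct and is the standard argument: same box means $|p(i)-q(i)|<w$ in every coordinate, hence $\Delta(p,q)<\sqrt{d}\,w=\eps$. Your centre-of-box argument for part (2) is also a valid derivation of the bound $\sqrt{\sum_i (g(i)-g'(i))^2}\le 2\eps/w=2\sqrt{d}$.

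The genuine problem is in your closing remark, where you claim that a more careful coordinatewise accounting would recover the literal threshold $\sqrt{\sum_i (g(i)-g'(i))^2}\le\sqrt{d}$ appearing in the paper's definition of ``neighbours''. It would not: under that literal $\ell_2$ reading the statement of part (2) is simply false. Take $d=2$, so $w=\eps/\sqrt{2}$, let $p=(0.99\,w,\;c)$ and $q=(2w,\;c)$; then $\Delta(p,q)=1.01\,w\approx 0.71\,\eps\le\eps$, but $\fid(p)$ and $\fid(q)$ differ by $2$ in the first coordinate, and $2>\sqrt{2}$. The correct coordinatewise accounting gives $|g(i)-g'(i)|\le\lceil |p(i)-q(i)|/w\rceil\le\lceil\sqrt{d}\,\rceil$ for each $i$, i.e.\ an $\ell_\infty$ bound of roughly $\sqrt{d}$ per coordinate (which is exactly what makes the neighbour count $(2\sqrt{d}+1)^d$ quoted in the paper come out), while the best uniform $\ell_2$ bound on the index vectors is $2\sqrt{d}$, which your triangle-inequality argument already attains. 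So the right conclusion is not that the constant can be sharpened, but that the neighbour relation must be read coordinatewise (difference at most $\lceil\sqrt{d}\,\rceil$ in each coordinate), or equivalently the $\ell_2$ threshold in the definition should be $2\sqrt{d}$; under either reading your argument suffices and nothing downstream changes, since $d=O(1)$ keeps the number of neighbours constant.
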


From Observation \ref{obs:boxes}, it follows that $i, j \in [n]$ exist with $\Delta(p_i,p_j) \le \eps$, if and only if we have one of the following two cases:
\begin{itemize}
    \item Either there is such a pair $i, j$ with $\fid(p_i) = \fid(p_j)$.
    \item Or there is no such pair, and then there must exist two neighbouring boxes $\fid(i)$ and $\fid(j)$, each containing a single point, with $\Delta(p_i, p_j) \le \eps$.
\end{itemize}

\bigskip
We now describe the data structure itself. Let us assume without loss of generality that $n$ is a power of $2$.

\begin{definition}[Data Structure for $\CP$]
\label{def:CPdatastructure}
Let $S \subseteq [n] \times \Sigma$, with $|S|=r$, and such that every $i \in [n]$ appears in at most one pair $(i,x) \in S$. The \textit{closest-pair tree} that represents $S$, denoted by $\tcpT(S)$, is a complete rooted binary tree with $|\Sigma|$ leaves. Each leaf node $x \in \Sigma$ is labeled by a number $\externalCount_x \in \{0,\dots,n\}$, and a prefix-sum tree $P(S_x)$ representing the set $S_x = \{i \in [n] \mid (i,x) \in S\}$. Each internal node $w$ is labeled by a bit $\flag_w \in \ZO$. These labels obey the following rules:
\begin{itemize}
    \item If $|S_x| = 1$, then $\externalCount_x$ is the number of boxes $y \neq x$, which are neighbours of $x$, and which have $|S_y| = 1$ and $\Delta(p_i, p_j) \le \eps$ for the (unique) $j \in S_y$.
    \item If $|S_x| \ge 2$, then $\externalCount_x = 0$.
    \item The $\flag_w=1$ if any of the children $x$ descendants to the internal node $w$ have either $|S_x| \ge 2$ or $|S_x| = 1$ and $\externalCount_x \ge 1$. 
    \end{itemize}
\end{definition}

It follows from the above discussion that there exist two elements $(i,x),(j,y) \in S$ with $\Delta(p_i, p_j) \le \eps$ if and only if $\flag_{\text{root}} = 1$ in $\tcpT(S)$. We now show how to efficiently maintain $\tcpT(S)$ under insertions and removals.

\paragraph{Memory Representation} A $\TCP$ tree is represented in the memory by an array of $|\Sigma| - 1$ bits of memory holding the flags of the internal nodes, followed by $|\Sigma|$ blocks of $n \log n + n$ bits of memory each, holding the labels of the leaf nodes. The blocks appear in the same order as a breadth-first traversal of $\tcpT(S)$. Consequently, for every $S \subseteq [n] \times \Sigma$ there is a corresponding binary string of $|\Sigma|-1 + (n \log n + n)|\Sigma|$ that uniquely encodes $\tcpT(S)$. Crucially, if $|S| = r$, then at most $O(r(\log|\Sigma|+\log n))$ of these bits are $1$. Since $|\Sigma| = L^{O(d)} = \poly(n)$ (recall $d = O(1)$), the encoding is $\tO(r)$-sparse.

\begin{figure}[h]
    \centering
    \scalebox{0.8}{
\begin{tikzpicture}[scale=0.5]
\draw (0,0) node 
  [shape=circle, minimum height=1.5cm, draw=black]
 (Level1){\texttt{flag}};
 
\draw (0,-4.5) node 
  [shape=circle, minimum height=1.5cm, draw=black]
 (Level21){\texttt{flag}};
\draw (8,-4.5) node 
  [shape=circle, minimum height=1.5cm, draw=black]
 (Level22){\texttt{flag}};
\draw (-8,-4.5) node 
  [shape=circle, minimum height=1.5cm, draw=black]
 (Level23){\texttt{flag}};

\draw (2,-9) node 
  [shape=circle, minimum height=1.5cm, draw=black]
 (Level31){\texttt{flag}};
\draw (-2,-9) node 
  [shape=circle, minimum height=1.5cm, draw=black]
 (Level32){\texttt{flag}};
\draw (6,-9) node 
  [shape=circle, minimum height=1.5cm, draw=black]
 (Level33){\texttt{flag}};
\draw (-6,-9) node 
  [shape=circle, minimum height=1.5cm, draw=black]
 (Level34){\texttt{flag}};
\draw (10,-9) node 
  [shape=circle, minimum height=1.5cm, draw=black]
 (Level35){\texttt{flag}};
\draw (-10,-9) node 
  [shape=circle, minimum height=1.5cm, draw=black]
 (Level36){\texttt{flag}};
 
\draw (0,-12.5) node 
  [shape=rectangle, minimum height=0.75cm, minimum width=2cm, draw=black]
 (BitVector0){$x$};
\draw (4,-12.5) node 
  [shape=rectangle, minimum height=0.75cm, minimum width=2cm, draw=black]
 (BitVector1){$...$};
\draw (8,-12.5) node 
  [shape=rectangle, minimum height=0.75cm, minimum width=2cm, draw=black]
 (BitVectorDash){...};
\draw (12,-12.5) node 
  [shape=rectangle, minimum height=0.75cm, minimum width=2cm, draw=black]
 (BitVectorNcube){$\ell_{|\Sigma|}$};
\draw (-4,-12.5) node 
  [shape=rectangle, minimum height=0.75cm, minimum width=2cm, draw=black]
 (BitVectorMinus1){$...$};
\draw (-8,-12.5) node 
  [shape=rectangle, minimum height=0.75cm, minimum width=2cm, draw=black]
 (BitVectorMinusDash){$\ell_2$};
\draw (-12,-12.5) node 
  [shape=rectangle, minimum height=0.75cm, minimum width=2cm, draw=black]
 (BitVectorMinusNcube){$\ell_1$};

\draw [line width=1pt,->] (Level35) -- (BitVectorNcube); 
\draw [line width=1pt,->] (Level36) -- (BitVectorMinusNcube);
\draw [line width=1pt,->] (Level35) -- (BitVectorDash); 
\draw [line width=1pt,->] (Level36) -- (BitVectorMinusDash);
\draw [line width=1pt,->] (Level33) -- (BitVectorDash); 
\draw [line width=1pt,->] (Level34) -- (BitVectorMinusDash);
\draw [line width=1pt,->] (-6,-10.45) -- (BitVectorMinusDash);
\draw [line width=1pt,->] (6,-10.45) -- (BitVectorDash);
\draw [line width=1pt,->] (Level32) -- (BitVectorMinus1);
\draw [line width=1pt,->] (Level32) -- (BitVector0);
\draw [line width=1pt,->] (Level31) -- (BitVectorDash);
\draw [line width=1pt,->] (Level31) -- (BitVector1);

\draw [line width=1pt,->] (Level21) -- (Level31);
\draw [line width=1pt,->] (Level21) -- (Level32);
\draw [line width=1pt,->] (Level23) -- (Level34);
\draw [line width=1pt,->] (Level23) -- (Level36);
\draw [line width=1pt,->] (Level22) -- (Level33);
\draw [line width=1pt,->] (Level22) -- (Level35);
\draw [line width=1pt,dotted, ->] (1,-1) -- (2.5,-2.5);
\draw [line width=1pt,dotted, ->] (-1,-1) -- (-2.5,-2.5);

\draw [line width=1pt, -] (2,-13.25) -- (3.5,-13.75);
\draw [line width=1pt, -] (-2,-13.25) -- (-3.5,-13.75);

\draw (-1.5,-14.5) node 
 [shape=rectangle, minimum height=0.75cm,  minimum width=2cm, draw=black]
 (BitVectorDescription2){$\externalCount_{x}$};
\draw (2,-14.5) node 
  [shape=rectangle, minimum height=0.75cm,  minimum width=1.5cm, draw=black]
 (BitVectorDescription31){$P(S_{x})$};

\end{tikzpicture}
}
    \caption{Data structure for the $\CP$ problem.} 
    \label{fig:TCP}
\end{figure}
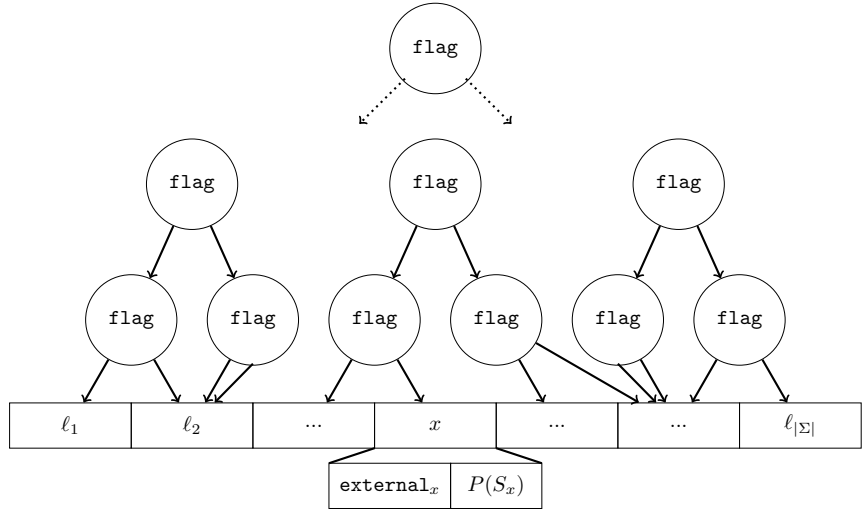

\paragraph{Implementation of data structure operations} It is clear from the definition of $\TCP$ tree and its memory representation that a tree $\tcpT(S)$ represents a set $S \subseteq [n]\times \Sigma$ in a history-independent way. We will now argue that all the required data structure operations take $O(\log n)$ time in the worst case. For every $(i, x) \in [n] \times [L]^d$ there is a corresponding $(i, z) \in [n] \times \Sigma$, with $z=\fid(x)$, stored in the data structure.  
\begin{itemize}
    \item \textbf{Insertion} To insert $(i,x)$ in the tree, first go to the memory location corresponding to leaf $x$. Begin by inserting $i$ in the prefix-sum tree $P(S_x)$. Then three cases arise
    \begin{itemize}
        \item If $|S_x|=1$ then for every neighbour $y$ of $x$ with $|S_y|=1$ do the following: Using the prefix-sum tree at leaf $y$ obtain the only non-zero leaf index $j$ of $P(S_y)$. This operation takes $\log n$ time. Then check if $\Delta(p_i,p_j)\leq \eps$, if yes then increase the values of both $\externalCount_x$ and $\externalCount_y$ by 1. If this caused $\externalCount_y > 0$ then set $\flag_w=1$ for all internal nodes $w$ on the path from leaf $y$ to the root of $\tcpT(S)$.
        
        After going over all neighbours, check if $\externalCount_x \geq 1$, if it is then set $\flag_w=1$ for all internal nodes $w$ on the path from leaf $x$ to the root of $\tcpT(S)$. This process takes at most $(2\sqrt{d}+1)^d\log n$ time as there will be at most $(2\sqrt{d}+1)^d$ neighbours, which is $O(\log n)$ for  $d=O(1)$.
        
        \item If $|S_x|=2$ using the prefix-sum tree $P(S_x)$ obtain the only other non-zero leaf index $i'\neq i$ of $P(S_x)$. Then for all neighbours $y$ of $x$ with $|S_y|=1$ do the following: Using the prefix-sum tree $P(S_y)$ obtain the only non-zero index $j$ of $P(S_y)$. Check if  $\Delta(p_{i'},p_j) \leq \eps$, and if so decrease the value of $\externalCount_y$ by 1. If that results in making $\externalCount_y=0$ then set $\flag_w = 0$ for the parent of $y$, unless the other child $y'$ of the parent of $y$ has $|S_{y'}| \ge 2$ or $\externalCount_{y'} \ge 1$. Likewise, among all the internal nodes $w$ that are on the path from the root to $y$'s parent, update the $\flag_w$ accordingly, i.e., set $\flag_w=1$ if any child $u$ of $w$ has $\flag_u=1$, and otherwise set $\flag_w=0$.
        
        Having done that, set $\externalCount_x=0$ and set $\flag_w=1$ for all internal nodes $w$ from leaf $x$ to the root $\tcpT(S)$. This process also takes $O(\log n)$ time (when $d$ is a constant).
        \item If $|S_x|>2$ then do nothing.
    \end{itemize}

    \item \textbf{Deletion} The procedure to delete is similar to the insertion procedure.
    
    \item \textbf{Query} To check if the tree has a pair $(i,x),(j,y) \in S$ such that $\Delta(p_i,p_j) \leq \eps$, we need only check if $\flag_{\text{root}}=1$, which takes $O(1)$ time.
\end{itemize}

\paragraph{Runtime, error and memory usage} Using the above data-structure, the runtime of this $\TCP$ algorithm is now $\tO(n^{\frac{2}{3}})$ time. The total memory used is $\tO(n|\Sigma|)$ bits. However, note that at any point of time in any branch of computation this algorithm stores sets of size $r = O(n^{\frac{2}{3}})$. Hence their algorithm with this data structure is a $\tO(n^{\frac{2}{3}})$-sparse algorithm. Thus, invoking Theorem~\ref{thm:main} we conclude the following.

\begin{corollary}
There is a bounded-error $\QRAM$ algorithm that computes $\TCP$ in $\tO(n^{2/{3}})$ time using $\tO(n^{2/{3}})$ memory qubits.
\end{corollary}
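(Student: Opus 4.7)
The plan is to combine the sparse data structure $\tcpT(S)$ just described with a standard Ambainis/MNRS-style quantum walk on the Johnson graph $J(n,r)$ with $r = n^{2/3}$, and then invoke Theorem~\ref{thm:main} as a black box to compress the memory. Each vertex of $J(n,r)$ is a subset $S \subseteq [n]$ of size $r$, and we attach to it the closest-pair tree $\tcpT(S')$ storing the discretized set $S' = \{(i, \fid(p_i)) \mid i \in S\}$. A vertex is marked precisely when $\flag_{\text{root}} = 1$, which by Observation~\ref{obs:boxes} and Definition~\ref{def:CPdatastructure} is equivalent to $S$ containing two indices $i,j$ with $\Delta(p_i, p_j) \le \eps$.

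Next, I would bound the costs of the walk in the \emph{uncompressed} model. Setup consists of $r$ insertions, each costing $O(\log n)$ by the analysis above, giving $\mathsf{S} = \tO(r)$. An update step swaps one element in and another out of $S$, which is one deletion followed by one insertion, so $\mathsf{U} = O(\log n) = \tO(1)$. The checking cost is $\mathsf{C} = O(1)$. With the Johnson graph's spectral gap $\delta = \Omega(1/r)$ and the fraction of marked vertices $\eps \ge \Omega(r^2/n^2)$ when a solution exists, the MNRS bound $\mathsf{S} + \tfrac{1}{\sqrt\eps}(\tfrac{1}{\sqrt\delta}\mathsf{U} + \mathsf{C})$ evaluates to $\tO(r + (n/r)(\sqrt r \cdot 1 + 1)) = \tO(n^{2/3})$ when $r = n^{2/3}$. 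All operations must be made unitary (using the $\unitaryAlloc$/$\unitaryFree$ style subroutines already present in the prefix-sum part of $\tcpT$), so that the walk acts coherently on the superposition over memory layouts and so maintains the unique history-independent state.

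The uncompressed algorithm uses $M = \tO(n|\Sigma|) = \poly(n)$ memory qubits, which is too many. However, at every timestep and on every branch the data structure stores $r$ points and thus has Hamming weight $\tO(r)$ in its memory representation (as noted in the excerpt), so the entire QRAM algorithm is $\tO(n^{2/3})$-sparse. Applying Theorem~\ref{thm:main} with $T = \tO(n^{2/3})$, $m = \tO(n^{2/3})$, $M = \poly(n)$, and a constant target error, the compressed algorithm runs in time $\tO(T \cdot \log M) = \tO(n^{2/3})$ and uses $O(m \log M) = \tO(n^{2/3})$ memory qubits, giving the claimed corollary.

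The main obstacle is not any single step but a bookkeeping issue: one has to check that every auxiliary register used by the walk (coin space, diffusion scratch, neighbour enumeration over the $(2\sqrt d + 1)^d$ boxes) is zeroed out at the end of each subroutine so that sparsity is preserved uniformly across the full computation, not just on the main data register. Since each update touches at most $O(\log n)$ neighbour boxes and its scratch space has $O(\log n)$ bits returned to $\ket 0$, the global Hamming weight stays $\tO(r)$, and Theorem~\ref{thm:main} applies without modification.
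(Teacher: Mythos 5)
Your proposal is correct and follows essentially the same route as the paper: replace Aaronson et al's space-efficient structure with the sparse closest-pair tree, run the standard Johnson-graph walk with $r = n^{2/3}$ (setup $\tO(r)$, update $\tO(1)$, check $O(1)$), observe the memory stays $\tO(n^{2/3})$-sparse, and invoke Theorem~\ref{thm:main} to compress. The walk cost accounting and the remark about returning ancillas to zero are details the paper leaves implicit (deferring to the cited quantum-walk framework), but they match its intent.
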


\subsection{Fine-Grained Complexity via Quantum Walks}
\label{sec:BLPS22}
The authors' own paper \cite{buhrman-3SUMhard-2021} shows that the quantum 3SUM conjecture, which states that there exists no truly sublinear quantum algorithm for 3SUM, implies several other quantum lower-bounds. The reductions use quantum walks together with complicated space-efficient data structures. We had already realized, when writing the paper, that simple yet space-inefficient data structures could be used instead, and included this observation in the paper, so we will not repeat it here. Section 3.1, with the space inefficient sparse data structures, is 4 pages long, whereas section 3.2, with the complicated space efficient data structures, is 12 pages long.

\section{Acknowledgments}
Subhasree Patro is supported by the Robert Bosch Stiftung. Harry Buhrman, Subhasree Patro, and Florian Speelman are additionally supported by NWO Gravitation grants NETWORKS and QSC, and EU grant QuantAlgo. Bruno Loff's research is supported by National Funds through the Portuguese funding agency, FCT - Fundação para a Ciência e a Tecnologia, within project LA/P/0063/2020. This work was supported by the Dutch Ministry of Economic Affairs and Climate Policy (EZK), as part of the Quantum Delta NL programme.

\bibliographystyle{alpha}
\bibliography{CiteHere.bib}

\end{document}